\newcommand{\N}{\mathbb{N}}
\newcommand{\Z}{\mathbb{Z}}
\newcommand{\C}{\mathbb{C}}
\newcommand{\R}{\mathbb{R}}
\newtheorem{proposition}{Proposition}
\newcommand{\defeq}{:=}
\title{Laughlin states change under large geometry deformations and imaginary time Hamiltonian dynamics}
\author[a]{Gabriel Matos}
\author[c,d]{Bruno Mera}
\author[b,d]{José M. Mourão}
\author[e]{Paulo D. Mourão}
\author[b,d]{João P. Nunes}
\affiliation[a]{
School of Physics and Astronomy, University of Leeds, Leeds LS2 9JT, UK
}
\affiliation[b]{Center for Mathematical Analysis, Geometry and Dynamical Systems, Instituto Superior Técnico, Universidade de Lisboa, 1049-001 Lisboa, Portugal} 	
\affiliation[c]{Instituto de Telecomunicações, 1049-001 Lisboa, Portugal}
\affiliation[d]{Department of Mathematics, Instituto Superior Técnico, Universidade de Lisboa, 1049-001 Lisboa, Portugal} 	
\affiliation[e]{Section of Mathematics, Université de Genève, Switzerland}
\emailAdd{pygdfm@leeds.ac.uk}
\emailAdd{bruno.mera@tecnico.ulisboa.pt}
\emailAdd{Paulo.Mourao@unige.ch}
\emailAdd{jpnunes@math.tecnico.ulisboa.pt}
\emailAdd{jmourao@math.tecnico.ulisboa.pt}
\gdef\@fpheader{}
\abstract{We study the change of
the Laughlin
states
under large deformations of 
 the geometry of the
sphere and the 
plane, 
associated with 
 Mabuchi geodesics on the space of  metrics
with Hamiltonian $S^1$--symmetry.

For geodesics associated with 
the square of the 
 symmetry generator,
as the geodesic time goes to infinity,
the geometry of the sphere becomes that of a thin cigar collapsing to a line and
the Laughlin states become concentrated on a discrete set 
of $S^1$--orbits, corresponding to Bohr-Sommerfeld orbits of geometric quantization.

The lifting of the Mabuchi geodesics to the bundle of quantum states, to which the Laughlin states belong, is achieved via generalized coherent state transforms, which correspond to the KZ parallel transport of 
Chern-Simons theory.

}
\begin{document}
\maketitle
\section{Introduction}
\label{sec: introduction}
The importance of the surface geometry dependence of the Fractional Quantum 
Hall effect has been emphasized and studied
by several authors   
\cite{haldane:83,hal:85,jan:lie:sei:08, klevtsov:14,ferrari:klevtsov:14,can:laskin:wiegmann:15, laskin:can:wiegmann:15,Johri:Papic:Schmitteckert:Bhatt:Haldane:16, gromov:geraedts:bradlyn:17, 
liu:gromov:papic:18, klevtsov:19, murugan:shock:slayen:19, papicetal:21, liu:balram:papic:gromov:21, klevtsov:zvonkine:21}. In the present paper, we consider a deformation of the background metric of an oriented surface with an Hamiltonian 
$S^1$--symmetry, 
and with a uniform (with respect to the area $2$-form) magnetic field. This endows the configuration space of the system with the structure of an effective phase space whose K\"{a}hler quantization determines the lowest Landau level (LLL). Natural families of deformations of the physical metric are provided by paths which are geodesics relative to the Mabuchi metric on the space of K\"{a}hler structures~\cite{semmes:1992,donaldson:1999}. The latter are generated by imaginary time Hamiltonian flows~\cite{Mou:Nun:2015}.

We use methods from geometric quantization to study the evolution, 
along the geodesic family
of
deformations, of the single-particle and many-particle states, in particular for the case of Laughlin states. This is done via a generalized coherent state 
transform (GCST)~\cite{kirwin:mourao:nunes:13b}, which lifts the Mabuchi geodesics to the bundle of quantum states.
In the flat case with quadratic Hamiltonians,
these transforms are projectively unitary and coincide with the parallel transport with respect to the 
Knizhnik–Zamolodchikov-Hitchin (KZH) connection \cite{hitchin:90, axelrod:dellapietra:witten:91}. In the case
of cotangent bundles of Lie groups of compact type, these transforms are also,
for appropriate choices of Hamiltonians,
unitary transforms which correspond to 
the classical Segal-Bargmann transform 
and to the Hall coherent state transform \cite{hall:94, florentino:matias:mourao:nunes:05}.

The Mabuchi geodesics on the  space of
K\"ahler metrics on $\Sigma$ which we consider correspond to 
Hamiltonian motion in imaginary time generated by $H = x^2/2$, where $x$ is the function generating the Hamiltonian $S^1$--symmetry.
The geodesics exist for infinite
geodesic (= imaginary-Hamiltonian) time $s$. As $s \to +\infty$,
the lengths of the $S^1$--orbits
converge to zero and the scalar curvature converges, in the case of the plane and the sphere, to a $\delta$--function supported on the $S^1$--fixed points. In both cases, as the geometry becomes more and more deformed, the holomorphic fractional Laughlin wave functions converge to distributional wave functions supported on the ``Bohr-Sommerfeld leaves'', corresponding to integer values of $x$. 

This asymptotic behaviour has important consequences for the density profiles for extreme deformations of the geometry. In  particular, this geometric quantization inspired evolution of single-particle and many-particle LLL states leads to characteristic asymptotic features in the density profiles in the case of fractional filling factor. The study of quantum Hall states in different geometries has had a long history, 
 in particular in~\cite{haldane:83, hal:85}, Haldane considered the case of the sphere with the round metric and the torus with a flat metric, respectively; in~\cite{jan:lie:sei:08} the cylinder with standard metric was also considered. Recently, the change of the quantum Hall states with respect to small deformations of the geometry has received a lot of attention because it was deemed to be relevant for a full diagnosis of topological order~\cite{hu:liu:she:hal:21}. In particular, in~\cite{Johri:Papic:Schmitteckert:Bhatt:Haldane:16} the effect of these small deformations of the geometry of the cylinder was considered. Klevtsov~\cite{ferrari:klevtsov:14,klevtsov:14,klev:16,klevtsov:19} has given a general prescription for constructing Laughlin states in general Riemann surfaces which takes into account holomorphic data on the surface. 
 When changing the geometry, this pescription amounts to taking into account the
 change in the holomorphic structure
 of the wave functions.
  In the language of geometric quantization, this is implemented by evolution in imaginary time under the \emph{prequantum operator}. The latter is only part of the full GCST, and, although it  provides a transport along a path of geometries, it fails, among other things, to correct the norms of the 1-particle states as given
by the evolution with respect to the quantum operator. This leads to different asymptotic density profiles in the two approaches -- an interesting effect that could be probed experimentally.

In this work, we consider the deformation of LLL states for surfaces with a toric structure, that is with a K\"ahler structure invariant under an $S^1$--action.
More concretely, we consider the case of the
sphere and the plane
 with $S^1$-- invariant Riemannian metric, and study the associated particle densities for both the integer quantum Hall effect and the filling fraction $\nu=1/3$ generalized Laughlin states, along Hamiltonian evolution in imaginary time $\tau=-is, s>0$. As $s$ becomes large, the densities  concentrate on the  Bohr-Sommerfeld leaves, and have scalar coefficients that depend on  both the combinatorial properties of the Laughlin states and on the K\"ahler metric geometry of the surface.

The manuscript is organized follows. In Section \ref{sec:preliminaries}, we review aspects of the K\"ahler geometry of toric manifolds, Hamiltonian flows in complex time and geometric quantization. In Section \ref{sec: the lowest Landau level for deformed geometries}, we study the evolution of one-particle states along families of deformed geometries with $S^1-$symmetry. The corresponding Laughlin states are studied in Section \ref{chap:implement} and in Section \ref{densityprofile} we present the density profiles for the deformed geometries. We end with some conclusions in Section \ref{chap:conclusion}.
Our results contain and expand the results
obtained in the MSc Theses \cite{gabriel, mourao}.

\section{Preliminaries}
\label{sec:preliminaries}
In this section, we review basic material from the symplectic and complex geometry of toric K\"ahler surfaces and of their geometric quantization. We also review the application of Hamiltonian flows in imaginary time to the deformation of toric K\"ahler  structures. This will provide the main ingredients to describe the one-particle states for the quantum Hall effect in deformed geometries and later on also the associated many-particle Laughlin states.
\subsection{Toric K\"{a}hler surfaces}
\label{subsec: Kahler toric}
Recall that a toric K\"{a}hler surface is a connected $2-$dimensional 
K\"ahler manifold $(M,\omega, J, \gamma)$, with a compatible triple consisting 
of a symplectic structure $\omega$, complex structure $J$ and Riemannian metric $\gamma$, 
such that the three structures are invariant under an Hamiltonian $S^1-$action on $M$.

The image of the resulting moment map $\mu\colon M\to P$ is always a Delzant polytope (see, for instance, \cite{guillemin:1994}). In the two-dimensional case, that we are considering, 
the only three possibilities for  $P$ are the following:
 \begin{itemize}
     
     \item[\it (i)]
 Closed 
 interval $P=[a,b] \subset \mathbb R \, , (a<b)$, if $M = \mathbb{CP}^1 \cong S^2$.

 \item[\it (ii)]
 Half--line: $P={\mathbb R}_{\geq a} = [a, \infty) $, if
 $M = {\mathbb C}$.

\item[\it (iii)]
 $P=\mathbb R$, if 
 $M = {\mathbb C}^* = {\mathbb C} \setminus \{0\}$.     
    
 \end{itemize}

There are two natural systems of coordinates, that we recall below, and that allow one to encode 
the toric K\"ahler structure of the surface by means of a convex function on the polytope.
First, one has the toric holomorphic coordinate. This is given by $w=e^z$, $z=y+i\theta$, on the dense subset $M^0=\mu^{-1}\left(\text{int}(P)\right)
\cong \mathbb{C}^*$ in which the $S^1-$action takes the form \cite{guillemin:1994,abreu:03}
\begin{equation}
e^{it}\cdot\left(e^{y+i\theta}\right)=e^{y+i(\theta+t)},\quad e^{it}\in S^1, \, y\in \text{int}(P),\ 0\leq \theta< 2\pi.
\end{equation}
Furthermore, since $\omega$ is $S^1-$invariant, one can choose the K\"{a}hler potential $\kappa$ over $M^0$ to also be $S^1-$invariant, i.e. $\kappa=\kappa(y)$, implying that
\begin{equation}
    \omega\big|_{M^0}=i\partial\overline{\partial}\kappa=i\frac{\partial^2\kappa}{\partial z\partial \overline{z}}\,dz\wedge d\overline{z}=\frac{i}{4}\kappa''\,dz\wedge d\overline{z}=\frac{1}{2}\kappa''\,dy\wedge d\theta
\end{equation}
and
\begin{equation}
    \gamma\big|_{M^0}=\omega(\cdot,J\cdot)\big|_{M^0}=\kappa''dy^2,
\end{equation}
or, in matrix form,
\begin{equation}
    \gamma=\begin{bmatrix}
    \kappa''& 0\\
    0 & \kappa''
    \end{bmatrix}.
\end{equation}
 All possible K\"{a}hler toric structures with fixed complex structure are captured by the definition of this strictly convex function $\kappa$.

We now go back to the above coordinates $(y,\theta)\colon M^0\cong\mathbb{R}\times S^1$. Since $\kappa$ is strictly convex, one can define, via Legendre transformation,
\begin{equation}
    x=\frac{\partial \kappa}{\partial y}.
\end{equation}
Obviously, we have
\begin{equation}
    e^{it}\cdot (x,\theta)=(x,\theta+t),\quad e^{it}\in S^1
\end{equation}
and, in fact, it can be shown that, in coordinates $(x,\theta)$, known as action-angle coordinates, $\omega$ takes the simple form 
\begin{equation}\label{AAomega}
    \omega|_{M^0}=dx\wedge d\theta.
\end{equation}
From this, it becomes clear that $x$ is actually a moment map for this action and thus we get a diffeomorphism
\begin{equation}
    M^0\cong\text{int}(P)\times\mathbb{T}^n.
\end{equation}
Furthermore, we can define $g:\text{int}(P)\to\mathbb{R}$ as the Legendre transform of $\kappa$, which is known as the \emph{symplectic potential}. Then
\begin{equation}
    y=\frac{\partial g}{\partial x}\implies \frac{\partial}{\partial x}=g''\frac{\partial}{\partial y}
\end{equation}
and, applying this change of basis to $J$, we get
\begin{equation}
    J=\begin{bmatrix}
    \left(g''\right)^{-1} & 0\\
    0 & 1
    \end{bmatrix}
    \begin{bmatrix}
    0 & 1\\
    -1 & 0
    \end{bmatrix}
    \begin{bmatrix}
    g'' & 0\\
    0 & 1
    \end{bmatrix}=
    \begin{bmatrix}
    0 & \left(g''\right)^{-1}\\
    -g'' & 0
    \end{bmatrix}.
\end{equation}
Consequently, the K\"ahler metric becomes, in matrix form,
\begin{equation}\label{AAmetric}
    \gamma=\begin{bmatrix}
    g'' & 0\\
    0 & \left(g''\right)^{-1}
    \end{bmatrix}.
\end{equation}\par

Furthermore, from the general theory~\cite{guillemin:1994,abreu:03}, it is known that for every $P$ of the form
$$
P = \left\{x \in {\mathbb R}^n \, :  \, \ell_i(x) = \langle\nu_i , x \rangle + a_i \geq 0 \, , \quad i = 1, \dots, d\right\}  \, ,
$$
where the $\nu_i$'s are the exterior pointing normals to the facets and $d$ some non-negative integer, there is a ``canonical'' symplectic potential $g_P$ with fixed singular behaviour on its boundaries and given by
\begin{align}
\label{eq21}
g_P(x) = \frac{1}{2}\sum_{i=1}^d \ell_i(x)\log \ell_i(x).
\end{align}
For the case of toric K\"{a}hler surfaces considered in this work, we have $n=1$ and we will use this ``canonical'' symplectic potential as defining the unperturbed geometry for the three different possible choices of $P$.

From \cite{abreu:03}, it follows that one can deform the geometry by adding to a symplectic potential any function $H$, smooth on the whole of $P$
$$
g \mapsto g + H, 
$$
such that the sum remains convex. These deformations preserve the symplectic form since $\omega$ does not depend on $g$ (cf. Eq.~\eqref{AAomega}). These are the type of deformations we will find in the following discussion by considering Hamiltonian flows in imaginary time.

\subsection{Hamiltonian flows in imaginary time}
\label{subsec: Hamiltonian flows in imaginary time}
First, let us recall a few facts from the theory of flows in complex time. Let $(M,\omega,J,\gamma)$ be a compact K\"{a}hler manifold, where $\omega$ is the symplectic structure, $J$ is the complex structure and $\gamma$ is the Riemannian metric and such that all the three structures are real analytic. Let $H$ be a real analytic Hamiltonian function on $M$ and denote by $X_{H}$ the associated Hamiltonian vector field, i.e., $\iota_{X_{H}}\omega=dH$. It follows from the theory of ODEs that the flow of $X_H$, $\varphi^{X_H}_t$, is also real analytic. Additionally, we have that, within regions of convergence \cite{MR3455877},
\begin{equation*}
    f\in C^\omega(M)\implies \left(\varphi_t^{X_H}\right)^*f=e^{tX_H}f=\sum_{k=0}^\infty\frac{X_H^k(f)}{k!}t^k,
\end{equation*}
which, for some $T>0$ and $\tau\in\mathbb{C}$, $|\tau|<T$, we can analytically continue to get
\begin{equation}\label{cflow}
 e^{\tau X_H}f=\sum_{k=0}^\infty\frac{X_H^k(f)}{k!}\tau^k \in C^\omega(M)\otimes\mathbb{C}.
\end{equation}

We can, in particular, apply (\ref{cflow}) to local $J$-holomorphic coordinates on $M$, such that for small enough $\tau$, one obtains 
new local functions
\begin{equation}\label{ccoord}
     z^j_{\tau}=e^{\tau X_{H}}z^j.
\end{equation}
It is shown in \cite{Mou:Nun:2015} that this local construction in fact  globalizes to produce a well-defined new global complex structure $J_\tau$, such that $z_\tau^j$ are local $J_\tau$-holomorphic coordinates, and that 
it defines a diffeomorphism $\varphi^{X_H}_\tau : M\to M$, such that $(\varphi^{X_H}_\tau)^*J_\tau = J.$
Moreover, one obtains a new global K\"ahler structure $(M,\omega, J_{\tau},\gamma_{\tau})$ where the original symplectic form is unchanged.
Therefore, complex time Hamiltonian flows give us a systematic way of deforming K\"{a}hler structures.

We now apply the concepts discussed above specifically to the case of $2$-dimensional Kähler toric manifolds. For that, let us consider a Kähler toric surface $(M,\omega,J,P)$ as described in Subsection~\ref{subsec: Kahler toric}, with toric holomorphic coordinate $w=e^{z}$, with $z= y+i\theta$ and action-angle coordinates $(x,\theta)$, and a Hamiltonian $H=H(x)$ on $M$. Then

\begin{equation}
    X_{H}=-H'(x)\frac{\partial}{\partial\theta}
\end{equation}
and applying (\ref{ccoord}) to the toric holomorphic coordinate, one obtains
\begin{equation}\label{eqws}
    w_s= \left(e^{\tau X_H}\cdot w\right)\Big|_{\tau=-is}=\left(e^{y+i(\theta-\tau H'(x))}\right)\Big|_{\tau=-is}=e^{y+sH'(x)+i\theta}
\end{equation}
which also motivates the definitions $y_s= y+sH'(x)$ and $z_s= y_s+i\theta$.

Since we are keeping the symplectic structure fixed, the same action-angle coordinates are applicable and thus
\begin{equation}\label{eqgs}
    y_s=\frac{\partial g_s}{\partial x} \iff y+sH'(x)=\frac{\partial g_s}{\partial x}\iff g_s=g+s\left(H(x)+c\right)
\end{equation}
where $c\in\mathbb{R}$ is an integration constant, which we take to be zero since it does not alter the system.
Consequently, the deformed metric has the  form \cite{abreu:98,MR1969265}
\begin{eqnarray}
  \label{equ23b}
   \gamma_s(x) &=& g_s''(x) dx^2 + \frac 1{g_s''} \, d\theta^2 
\end{eqnarray}
with scalar curvature given by Abreu's formula \cite{MR1969265},
\begin{equation}
\label{equ24b} 
Sc(x) = - \left(\frac 1{g_s''(x)} \right)''
\, .
\end{equation}
As for the K\"{a}hler potential, since it is the Legendre dual of $g$, it is given by $\kappa_s=xy_s-g_s$.

\subsection{Geometric quantization}
\label{sec: geometric quantization}
Given a symplectic manifold $(M,\omega)$, describing the phase space of some classical system, the mathematical problem of finding its quantization can be addressed in a geometric framework known as geometric quantization. Here, one assumes that there exists a complex line bundle $L\to M$, equipped with a compatible connection 
$\nabla$ and Hermitian structure such that the curvature of $\nabla$ is $-\frac{i}{\hbar}\omega$. $L$ is known as the \emph{pre-quantum line bundle}.  Note that this places an integrality condition $\left[\frac{\omega}{2\pi\hbar}\right]\in H^2(M,\mathbb{Z})$. 

Geometric quantization produces a Hilbert space of quantum states, $\mathcal{H}_P$, depending on the choice of a polarization: an integrable Lagrangian distribution $P\subset TM\otimes \mathbb{C}$. The Hilbert space is then
$$
\mathcal{H}_P = \overline{\{s\in\Gamma(M,L): \nabla_{X}s=0, \text{ for  any } X\in\Gamma(M,\overline{P}) \text{ and } ||s||_{L^2}^2<\infty\}},
$$
where the completion is with respect to the $L^2$-norm.
When $(M,\omega)$ is K\"ahler, a natural choice is to take $P$ to be the holomorphic tangent bundle of $M$, $P=T^{(1,0)}M$, so that one obtains $\mathcal{H}_P=H^0(M,L)$.

Given $f\in C^\infty(M)$, its prequantum operator is a complex-linear map $Q_{\text{pre}}(f):\Gamma(M,L)\to \Gamma(M,L)$ defined by 
%\vi{
\begin{equation}\label{preq}
    Q_{\text{pre}}(f)= i\hbar\nabla_{X_f}+f.
\end{equation}
%}

Note that, for general $f$, $Q_{\text{pre}}(f)$ will not preserve the space of quantum states $\mathcal{H}_P$. This will happen if $\left[X_f,\overline{P}\right]\subset\overline{P}$.

Above, we have described how Hamiltonian flows in imaginary time $\tau$ can be used to deform the complex structure of a K\"ahler manifold. As the complex structure changes along the deformation, we will obtain a family of K\"ahler polarizations $P_\tau$ and the corresponding Hilbert spaces of quantum states, $\mathcal{H}_{P_\tau}$ will vary in the space of smooth sections of $L$, $\Gamma(M,L)$. To relate these Hilbert spaces to each other, we need, given a complex time Hamiltonian flow $\varphi^{X_H}_\tau$ as described in Section~\ref{subsec: Hamiltonian flows in imaginary time}, a way to lift this deformation to the system quantized via geometric quantization. Since the initial polarization $P$ changes to a polarization $P_\tau$, we will naturally get a map
\begin{equation*}
    U_\tau\colon \mathcal{H}_{P}\to\mathcal{H}_{P_\tau}.
\end{equation*}
In this work we considered evolutions in imaginary time $\tau=-is$ and the appropriate choice turns out to be a GCST of the form (see \cite{kirwin:mourao:nunes:13b})
\begin{equation}
\label{eq: evolutionop}
    U_s=\left(e^{\frac{i}{\hbar}\tau Q_{\text{pre}}(H)}e^{-\frac{i}{\hbar}\tau Q(H)}\right)\Big|_{\tau=-is},
\end{equation}
where $Q(H)$ is an appropriate quantum operator for $H$ that we will describe explicitly, for the case of toric deformations of the initial toric K\"ahler structure, below.
Unlike $Q_{\text{pre}}(H)$, the quantum operator $Q(H)$ is a choice of quantization of the classical observable $H$ that preserves $\mathcal{H}_{P_0}$. For example, if $h$ is a classical observable such that $X_h$ preserves ${P_0}$ and if $H = h^2$ then a natural choice will be $Q(H) = (Q_{\text{pre}}(h))^2$.
Note that the operator
\begin{equation}
\label{eq: preevolutionop}
    U^{\text{pre}}_s=e^{\frac{i}{\hbar}\tau Q_\text{pre}(H)}
\end{equation}
is a generalization of time evolution in quantum mechanics and we know it does not preserve the Hilbert space of polarized sections, but rather maps $J$-polarized states to $J_\tau$-polarized states (cf. Eq.\eqref{prequantumcalculation}) in a natural way. In fact, the Laughlin states used by Klevtsov (see \cite{klevtsov:19}) can be replicated by simply considering this operator.\par 
However, this evolution by prequantization of the Hamiltonian is highly non-unitary. The inclusion of $e^{-\frac{i}{\hbar}\tau Q(H)}$, which we know to preserve the Hilbert space, ``reverts" the effect of the prequantum evolution operator on quantum states without preventing the geometry deformation of our system ( Eq.~\eqref{quantumcalc}) and restores unitarity asymptotically. We will analyse closely the non-unitarity of~\eqref{eq: preevolutionop}, as well as the consequences of taking the above GCST instead, in Proposition~\ref{limitgcst} and also Section in \ref{densityprofile} where particle density profiles will be presented.

\section{The lowest Landau level for deformed geometries}
\label{sec: the lowest Landau level for deformed geometries}
We consider charged particles living on a K\"ahler surface $(M,\omega,J,\gamma)$ and subject to a uniform external magnetic field. By a uniform magnetic field we mean that the Faraday $2-$form $F$  is proportional to the area form of the surface, i.e., $F= B\omega$.
The curvature of $L$ is $\Omega= -i\frac{qF}{\hbar}=-i\frac{\omega}{\hbar_{\text{eff}}}$, where $q$ is the charge of the carriers and $B$ is the magnetic field, and where we introduced an effective Planck's constant, denoted $\hbar_{\text{eff}}$, associated to the configuration space which, effectively, behaves like a phase space in the physics of the lowest Landau level. Thus, $\hbar_{\text{eff}}=\hbar /(q B):=\ell_B^2$ where $\ell_B$ is the so-called magnetic length. Below, in the case of the sphere, the dimension of the one-particle Hilbert space is $N=h^0(L)=c_1(L)+1$ where $c_1(L) = qBA/h$, in which $A$ denotes the area of the surface. Thus, $N=1+qBA/h.$
The single-particle Hilbert space of the quantum theory is described by the square integrable sections of the electromagnetic line bundle $L\to M$, over which the electromagnetic gauge field describes a unitary connection whose curvature $2-$form is given by $-\frac{i}{\hbar_{\text{eff}}}\omega$. The single-particle Hamiltonian is described by the Bochner Laplacian, whose groundstate subspace, known in the physics literature as the lowest Landau level (LLL), is described by $H^0(M,L)$~\cite{klev:16}. It is now clear the relation to geometric quantization: the lowest Landau level is nothing but the Hilbert space of quantum states for the K\"ahler quantization of $M$.

In this section, we describe the LLLs on the  sphere and the plane with (toric) geometries deformed by Hamiltonian flow in imaginary time. 
We begin with the round sphere $S^2$ in Section \ref{ss211}, for which the process is described in detail. We determine explicit variations of the structure, along with holomorphic states and Hermitian product.\par
In section~\ref{ss212}, we make the analogy with the case of the plane, since we will also compute evolution of density profiles for this system in Section \ref{densityprofile}. For more details on deformations of the plane see \cite{gabriel}.\par
In Section \ref{gcstsectionn}, we explicitly apply GCST to the states described previously, namely in the limit $s\to\infty$. We also compare this evolution with the one obtained by using just the prequantum evolution operator of Eq.~\eqref{eq: preevolutionop}, as this is the evolution operator that produces the Laughlin states used by  Klevtsov (see \cite{klevtsov:19}).

\subsection{One particle states on the deformed  sphere}
\label{ss211}\leavevmode\par
We start with a first analysis of the sphere $M=S^2\cong\C P^1$.\par
As mentioned in Section \ref{subsec: Kahler toric}, the corresponding polytope in this case is $P=[a,b]\subset\R$, where $a<b$. Right away, we can use (\ref{AAomega}) to compute the symplectic area
\begin{equation}
    \int_M\omega=2\pi(b-a).
\end{equation}
One immediately concludes that, for the system to be quantizable, we must have
\begin{eqnarray}
  \frac{b-a}{\hbar_{\text{eff}}}\in\Z.
\end{eqnarray}
For simplicity of computations, we will now assume Planck units so that $\hbar_{\text{eff}}=1$ and hence $P=[a,a+N]$, for some $N\in\N$. The symplectic potential (\ref{eq21}) then reads
\begin{equation}
\label{eqq22aa}
g_P(x)=\frac{1}{2}\left(\left(x-a\right)\log\left(x-a\right)+\left(N+a-x\right)\log\left(N+a-x\right)\right)
\end{equation}

Furthermore, as we have seen in Section~\ref{sec:preliminaries}, the toric holomorphic coordinates are obtained from the action angle ones, by Legendre transform, hence we get
\begin{equation}\label{ws}
    x\mapsto y=\frac{\partial g_P}{\partial x}=\frac{1}{2}\log\left(\frac{x-a}{N+a-x}\right)\implies w=e^{y+i\theta}=\sqrt{\frac{x-a}{N+a-x}}e^{i\theta}
\end{equation}\par
Now, if we consider an imaginary time Hamiltonian flow induced by $H(x)=\frac{1}{2}x^2$, then the results of Eqs.~\eqref{eqws}, \eqref{eqgs} and \eqref{equ23b} from Section~\ref{sec:preliminaries}  yield
\begin{align}
    \label{ysESFERA}y_s&=y+sx\\
    \label{gsESFERA}g_s&=g_P+s\frac{x^2}{2}\\
    \label{wsESFERA}w_s&=\sqrt{\frac{x-a}{N+a}}e^{sx+i\theta}\\
    \label{eq23b}\gamma_s&= \left( \frac 1{2\left(x-a\right) \left(N+a - x \right)} + s\right)  dx^2 +\, \left( \frac 1{2\left(x-a\right) \left(N+a - x \right)} + s\right)^{-1} 
    d \theta^2
\end{align}
And the scalar curvature (\ref{equ24b}) becomes
\begin{equation}
\label{eq24b} 
Sc(x) = - \left(\frac{1}{\frac{1}{2\left(x-a\right)\left(N+a-x\right)}+s}\right)''
\, ,
 \end{equation}
which is constant at $s=0$ and concentrates
around the poles as $s \to \infty$ (see \cite{kirwin:mourao:nunes:13a} for more details).

Writting the induced metric as
\begin{equation}\label{roundsphere}
  \gamma_s(x)=g_s''\,dx^2 + \frac{1}{g_s''} \, d\theta^2=\left(\frac 1{2u \left(N - u \right)} +s\right)\, du^2 + \left(\frac{1}{
    2 u\left(N-u \right)}+s\right)^{-1}
    d \theta^2
\end{equation}
where we made the substitution $u=x-a$, we see that, if $s=0$, it is the Fubini-Study metric and, in general, it does not depend on the choice of $a$. Furthermore, if $x$ is a moment map, then so is $x-a$. This translates the polytope by $a$, but (\ref{roundsphere}) shows that (assuming we are using the canonical symplectic potential) the Kähler structure remains entirely equivalent. We can also see that the choice of $N$ amounts to a simple rescaling of the sphere and the imaginary time unit. It is then common practice to let these (so far irrelevant) parameters denote a specific choice of line bundle $L$ and basis of holomorphic sections in a very natural way. We now present this construction (more details and the generalization for $n$-dimensional Delzant polytopes can be found in \cite{MR1234037}).\par

First, consider a formal sum
\begin{eqnarray}\label{divisor}
  D^L := \lambda_1^LD_1+\lambda_2^LD_2
\end{eqnarray}
where $D_1=x^{-1}(a)$ and $D_2=x^{-1}(a+N)$. By taking $\lambda_1^{L},\lambda_2^{L}\in\mathbb{Z}$, we have what is called a toric divisor in algebraic geometry. It is a notation that is used represent sections of some line bundle $L$ on $M$ that can be given in local holomorphic coordinates around each $D_j$ by $z_s^{\lambda_j^L}$. This proves to be very convenient, as it can be shown that there exists only one such line bundle and only one section $\sigma_{D^L}$ of that line bundle, up to multiplication by a constant, with divisor $D^L$. We will now construct $L$ explicitly.

The toric holomorphic coordinates $w_s$ given in (\ref{ws}) cover $S^2\setminus\{D_1,D_2\}$ and can be trivially extended to $U_1\defeq S^2\setminus\{D_2\}$. These are simply stereographical projection coordinates and thus the atlas is completed by considering a similar map $\tilde{w}_s\colon U_2\to\C$, with $U_2:= S^2\setminus\{D_1\}$ and $w=\tilde{w}^{-1}$ in the intersection $U_1\cap U_2$. If the section $\sigma_{D^L}$ is given by $w_s^{\lambda_1^L}$ in $U_1$ and $\tilde{w}_s^{\lambda_2^L}$ in $U_2$, then
\begin{equation}
    \tilde{w}_s^{\lambda_2^L}=f_{1,2}\,w_s^{\lambda_1^L}\iff f_{1,2}=w_s^{-\left(\lambda_1^L+\lambda_2^L\right)}
\end{equation}
in $U_1\cap U_2$, where $f_{1,2}$ is the corresponding transition function for $L$, which uniquely defines the line bundle up to equivalence. Evidently, the space of holomorphic sections of this line bundle is given by $H^0(M,L):=\text{span}_{\mathbb{C}}\{w_s^{m}\sigma_{D^L}: \text{div}\left(w_s^m\sigma_{D^L}\right)\geq 0\}$, with
\begin{equation}\label{eq313}
\text{div}\left(w_s^m\sigma_{D^L}\right)\geq 0\iff m+\lambda_1^{L}\geq 0 \text{ and } -m+\lambda_2^L\geq 0.
\end{equation}
where $\text{div}$ denotes the divisor of a given section. Thus we see that there is a bijection between this $S^1$-equivariant basis of $H^0(\mathbb{CP}^1,L)$ and the integral points (we want $m\in\Z$ for the sections to be single-valued) of the Delzant polytope $P_L=\left[-\lambda_1^L,\lambda_2^L\right]$.\par
When considering half form however, it becomes appropriate to have $\lambda_j^L\in\Z+\frac{1}{2}$. This is because, since $w_s=e^{z_s}$, we have that $\text{div}(\sqrt{dz_s})=\text{div}\left(\sqrt{\frac{dw_s}{w_s}}\right)=-\frac{1}{2}(D_1^L+D_2^L)$, meaning that $\sigma_{D^L}\otimes \sqrt{dz_s}$ is single-valued only when the $\lambda_j^L$'s are half-integers. And although the rest of the argument is still valid, (\ref{eq313}) now becomes
\begin{equation}
\text{div}\left(w_s^m\sigma_{D^L}\otimes\sqrt{dz_s}\right)\geq 0\iff m+\lambda_1^{L}-\frac{1}{2}\geq 0 \text{ and } -m+\lambda_2^L-\frac{1}{2}\geq 0.
\end{equation}
and so once again we obtain a bijection with integral points of the Delzant polytope $P_L=\left[-\lambda_1^L,\lambda_2^L\right]$.\par
Hence, in either case, the divisor (\ref{divisor}) defines a line bundle $L$ on $M$ and a basis of $H^0\left(\mathbb{CP}^1,L\right)$ indexed by the integer points of $P_L$. Naturally, we let $a=\lambda_1^L$ and $N=\lambda_2^L+\lambda_1^L$, so that, as promised, the choice of polytope determines the line bundle we are considering. Note that, in this way, the choice of area of our surface (i.e. choice of $N$) determines how many one-particle states our system has.\par
In our case, we will be considering half-form correction and taking $a=-\frac{1}{2}$ for simplicity. Then $\lambda_2^L=N-\frac{1}{2}$ and, denoting $\sigma_{D^L}$ by $\sigma_{P,s}$, we write the elements of the basis as
\begin{align}
\label{eqq23}
\sigma^m_s=w^m_s\sigma_{P,s}\otimes \sqrt{dz_s},\ m\in P_L\cap\mathbb{Z}=\{0,...,N-1\},
\end{align}

Next we introduce an Hermitian structure on $L$ without half-form correction. These results are still applicable when including the half-form correction, by considering the product with the norm of the half-form correction (cf. Eq.~\eqref{halfformherm}) (see \cite{kirwin:mourao:nunes:13a}). For that, let $\nu\colon\Gamma(L)\times\Gamma(L)\to C^\infty(M,\mathbb{C})$ denote the pointwise Hermitian product with $\norm{\cdot}$ the corresponding norm. Note that, since $L$ is a line bundle, it is enough to give the value of $\norm{\sigma_{P,s}}$, since the Hermitian product then follows by skew-linearity. If $\tilde{s}_0$ is a local unitary trivializing section, then the connection is given locally by
\begin{equation}\label{cov}
    \nabla_{X}\left(f\tilde{s}_0\right)=\left(X(f)-i\theta(X)f\right)\tilde{s}_0.
\end{equation}
Since $\omega=i\partial\overline{\partial}\kappa_s$, we can take
\begin{equation}
    \theta=i\left(\overline{\partial}-\partial\right)\kappa_s.
\end{equation}
Therefore, the holomorphic sections we determined before to be given by $H^0(M,L)=\text{span}_{\mathbb{C}}\left\{w^m_s\sigma_{P,s}\colon m=0,\ldots,N-1\right\}$ are, locally, solutions of
$$
\nabla_{\frac{\partial}{\partial\overline{z}}}fs_0=0,\, \,\,\,{\rm or} \,\,\,\,\,\frac{\partial f}{\partial\overline{z}}+\frac{\partial \kappa_s}{\partial\overline{z}}f=0.
$$
%\begin{align*}
%    &\nabla_{\frac{\partial}{\partial\overline{z}}}fs_0=0\\
%    \iff&\frac{\partial f}{\partial\overline{z}}+\left(\left(\overline{\partial}-\partial\right)\kappa_s\right)\left(\frac{\partial}{\partial\overline{z}}\right)f=0\\
%    \iff&\frac{\partial f}{\partial\overline{z}}+\frac{\partial \kappa_s}{\partial\overline{z}}f=0
%\end{align*}
Defining $F=e^{\kappa_s}f$, the above reduces to
\begin{equation}
    \frac{\partial F}{\partial\overline{z}}=0.
\end{equation}
Hence
\begin{equation}\label{sigmaps}
    \sigma_{P,s}=e^{-\kappa_s}s_0\implies\nabla_{\frac{\partial}{\partial\overline{z}}}w^m\sigma_{P,s}=0
\end{equation}
and so we can define the hermitian structure by $\norm{\sigma_{P,s}}=e^{-\kappa_s}$. With half-form correction, this becomes
\begin{equation}\label{halfformherm}
    \norm{\sigma_{P,s}\otimes \sqrt{dz_s}}^2= \exp(-2\kappa_s)\norm{dz_s}
\end{equation}

Summarizing, the $S^1$-equivariant 
basis of the space of holomorphic
sections is given by
\begin{equation}
\label{eqq29}
\left\{  
\begin{array}{rcl}
\sigma_s^m &=& \widetilde{\sigma}^m_s\otimes \sqrt{dz_s},\qquad  m\in P\cap\mathbb{Z}=P_L\cap\mathbb{Z}=\{0,...,N-1\}\\
   \widetilde{\sigma}^m_s &=& w_s^m \, \sigma_{P,s} \, ,\\
   w_s &=& e^{z_s} 
   = e^{y_s + i \theta} \, ,\\
   y_s&=&g_s' \,,   \end{array}
\right.  
\end{equation}
where
\begin{equation}
\label{eqq210}
\left\{  
\begin{array}{rcl}
   g_s(x) &=& g_P(x) + \frac s2 \, x^2  \\
   \norm{\sigma_{P,s}\otimes \sqrt{dz_s}}^2 &=& \exp(-2\kappa_s)\norm{dz_s}
   \\
   \kappa_s &=& xy_s -g_s=\kappa_0 +s\left(xH' - H\right) \, , 
 \end{array}
 \right.
   \end{equation}
with $g_P$ defined in (\ref{eqq22aa}).

\subsection{One-particle states on the deformed plane}
\label{ss212}\leavevmode\par

The results of the previous subsection are converted
to the case of the plane 
if we replace
in the formulas of the previous section,
 the polytope $P=[-\frac 12, N-\frac 12]$
by $P=[- \frac 12, \infty )$. 
We see that the divisor $D_2$ does not exist and so we get a bijection between basis for the space of holomorphic sections and $P \cap \mathbb{Z} = \mathbb{Z}_{\geq 0}$. Therefore, as expected, since the 
symplectic area is infinite, 
the space of holomorphic sections is 
infinite dimensional. 
As usual in the quantum Hall effect 
we will describe approximately finite
discs by taking $N$--dimensional subspaces
of the space of holomorphic sections, namely
those corresponding to integers $m$ smaller than $N$. This also has the advantage of making the treatment done in the next chapter for the sphere entirely analogous to the one that would be done for the plane.

From (\ref{eq21}), the canonical symplectic potential 
reads,
\begin{equation}
\label{eqq211}
g_P(x) = \frac 12 \, \left(x+ \frac 12 \right) \log \left(x+ \frac 12\right)  \, ,  
\end{equation}
which corresponds to the standard, $S^1$-equivariant, flat metric on the plane. To simplify
the relation between the symplectic
coordinates $(x, \theta)$ and the
holomorphic ones $w$ let us change 
the canonical potential in (\ref{eqq211}) by a linear term
that leaves the metric unchanged,
\begin{equation}
\label{eqq211b}
g_P(x) \leadsto \tilde g_P(x) = \frac 12 \, \left(x+ \frac 12 \right) \log 2 \left(x+ \frac 12\right) - \frac x2 \, .   \,  
\end{equation}
We consider deformations 
of the geometry induced by 
the same convex function
of the $S^1$--Hamiltonian,
$H(x) = \frac 12 x^2$. 
The expressions (\ref{eqq29}) and (\ref{eqq210}) remain valid 
with $g_P$ replaced by $\tilde g_P$
in (\ref{eqq211}) so that, in particular,
$$
w = e^{\tilde g'_P(x) + i \theta}
= \sqrt{2\left(x + \frac 12\right)} \, e^{i \theta}
$$
with imaginary time evolution given by
$$
w_s = e^{is X_H}(w) = 
\sqrt{2\left(x + \frac 12\right)}\, e^{sx} \, e^{i \theta} = e^{sx} \, w \, , 
$$
and
$$
g_s(x) = \tilde g_P(x) + \frac s2 x^2 \, .
$$
For the metric and its 
scalar curvature we obtain from (\ref{equ23b}) and (\ref{equ24b})
\begin{equation}\label{eq214}
   \gamma_s(x)=\frac{2s\left(x+\frac{1}{2}\right)+1}{2\left(x+\frac{1}{2}\right)}\,dx^2+\frac{2\left(x+\frac{1}{2}\right)}{2s\left(x+\frac{1}{2}\right)+1}
    \, d \theta^2
\end{equation}
and
\begin{equation}\label{eq215} 
Sc(x) =\frac{8s}{\left(2s\left(x+\frac{1}{2}\right)+1\right)^3}
 \end{equation}

\subsection{GCST in the limit {$s \to \infty $}}\label{gcstsectionn}
We now analyse the evolutions described in the previous sections in the limit $s\to\infty$.
Geometric quantization in this scenario has a particular combinatorial flavour, and extreme deformations of the geometry leading to degenerations of the K\"{a}hler structure lead to the convergence of sections, in a distributional sense, to distributional sections localized in integer points of the polytope~\cite{nunes:14, kirwin:mourao:nunes:13a}, as is shown in the next proposition. It also states how both the GCST operator $U_s$ (cf. Eq.~\eqref{eq: evolutionop}) and the prequantum evolution operator $U^{\text{pre}}_s$ (cf. Eq.~\eqref{eq: preevolutionop}) act on one particle states, justifying the claims made in Section~\ref{sec: introduction}. In fact, we see that norms converge as $s\to\infty$ only for the GCST operator.

\begin{proposition}\label{limitgcst} (See \cite{bfmn,kirwin:mourao:nunes:13a})
Consider the GCST operator $U_s$ defined in (\ref{eq: evolutionop}), with $H=\frac{1}{2}x^2$ in action-angle coordinates. Then
\begin{equation}\label{prequantumcalculation}
    U^\text{pre}_s\left(\sigma^m_0\right)=\sigma^m_s
\end{equation}
where $U^\text{pre}_s$ is given by (\ref{eq: preevolutionop}) $\sigma^m_\tau$ denotes the one particle states defined in (\ref{eqq29}). Furthermore,
\begin{equation}\label{quantumcalc}
    U_s\left(\sigma^m_0\right)=e^{-s\frac{m^2}{2}}\sigma^m_s
\end{equation}
and
\begin{equation}\label{finallimit}
    \lim_{s\to\infty}U_s\left(\sigma^m_0\right)=\sqrt{2\pi}\,e^g\delta_m s_0\otimes\sqrt{dx},
\end{equation}
where $s_0$ is a unitary section (cf. Eq.~\eqref{sigmaps}), and $g=g_P$ for the sphere and $g=\tilde{g}_P$ for the plane.
\end{proposition}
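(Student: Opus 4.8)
The plan is to prove the three identities in turn, working throughout on the dense open orbit $M^{0}\cong\mathbb R\times S^{1}$ in the fixed unitary trivialization $s_{0}$ of $L$, where $\nabla=d-iA$ with $A=x\,d\theta$ (so $dA=\omega=dx\wedge d\theta$; this is the $s$-independent unitary connection form), where $X_{H}=-x\,\partial_{\theta}$ for $H=\tfrac12 x^{2}$, and where every state in (\ref{eqq29})--(\ref{eqq210}) is an explicit function of $(x,\theta)$ times $s_{0}\otimes\sqrt{dz_{s}}$. Substituting into (\ref{preq}) with $\hbar_{\mathrm{eff}}=1$, and adding the complexified Lie derivative $i\mathcal L_{X_{H}}$ acting on the half-form factor, one computes $Q_{\mathrm{pre}}(H)=-i\,x\,\partial_{\theta}-\tfrac12 x^{2}$ on the $L$-part. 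The key structural point is then that, since $\partial_{\theta}x=0$, the two summands of $\tfrac i\hbar\tau\,Q_{\mathrm{pre}}(H)\big|_{\tau=-is}=-isx\,\partial_{\theta}-\tfrac s2 x^{2}$ commute, so $U^{\mathrm{pre}}_{s}$ factorizes as $e^{-\frac s2 x^{2}}\,e^{-isx\,\partial_{\theta}}$ (together with the Lie-derivative action on $\sqrt{dz_{0}}$).

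For (\ref{prequantumcalculation}): the vector-field exponential $e^{-isx\,\partial_{\theta}}$ is exactly the analytic continuation to $\tau=-is$ of the toric $S^{1}$-flow of Section \ref{subsec: Hamiltonian flows in imaginary time}, so it carries the $J_{0}$-holomorphic data $(w,\sqrt{dz_{0}})$ into the $J_{s}$-holomorphic data $(w_{s},\sqrt{dz_{s}})$ of (\ref{eqq29}) (on the half-form one checks $\mathcal L_{X_{H}}^{2}(dz_{0})=0$, so the series terminates and $\sqrt{dz_{0}}\mapsto\sqrt{dz_{s}}$); meanwhile the multiplicative factor $e^{-\frac s2 x^{2}}=e^{\kappa_{0}-\kappa_{s}}$ (by (\ref{eqq210})) sends $\sigma_{P,0}=e^{-\kappa_{0}}s_{0}$ to $\sigma_{P,s}=e^{-\kappa_{s}}s_{0}$. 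Multiplying the three pieces gives $U^{\mathrm{pre}}_{s}(\sigma^{m}_{0})=w_{s}^{m}\,\sigma_{P,s}\otimes\sqrt{dz_{s}}=\sigma^{m}_{s}$; this is also the specialization of the general GCST transport of \cite{kirwin:mourao:nunes:13b} to the present toric setting.

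For (\ref{quantumcalc}): since $H=\tfrac12 x^{2}$ with $x$ the moment map whose Hamiltonian flow $X_{x}=-\partial_{\theta}$ preserves every polarization $P_{\tau}$, the appropriate quantum operator is $Q(H)=\tfrac12\big(Q_{\mathrm{pre}}(x)\big)^{2}$. Using $\partial_{\theta}w_{\tau}=iw_{\tau}$, $\partial_{\theta}\kappa_{\tau}=0$ and $A(\partial_{\theta})=x$ one gets $\nabla_{\partial_{\theta}}\sigma^{m}_{\tau}=i(m-x)\,\sigma^{m}_{\tau}$, whence $Q_{\mathrm{pre}}(x)=i\nabla_{X_{x}}+x$ acts on the weight-$m$ state $\sigma^{m}_{\tau}$ as multiplication by $m$, so $Q(H)\sigma^{m}_{0}=\tfrac{m^{2}}2\sigma^{m}_{0}$. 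Therefore $e^{-\frac i\hbar\tau Q(H)}\sigma^{m}_{0}\big|_{\tau=-is}=e^{-\frac{sm^{2}}2}\sigma^{m}_{0}$, and composing with the previous identity yields $U_{s}(\sigma^{m}_{0})=e^{-\frac{sm^{2}}2}\,\sigma^{m}_{s}$.

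For (\ref{finallimit}): writing $\sigma^{m}_{s}=e^{m(y+sx)+im\theta}\,e^{-\kappa_{0}-\frac s2 x^{2}}\,s_{0}\otimes\sqrt{dz_{s}}$ from (\ref{eqq29})--(\ref{eqq210}), the $s$-dependent scalar in $U_{s}(\sigma^{m}_{0})$ collapses to $\exp(msx-\tfrac s2 x^{2}-\tfrac s2 m^{2})=\exp(-\tfrac s2(x-m)^{2})$, a Gaussian in the action variable concentrating as $s\to\infty$ on the Bohr--Sommerfeld orbit $\{x=m\}$ (an interior integer point of $P$ both for the sphere, $m\in\{0,\dots,N-1\}$, and for the plane). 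From $dz_{s}=(g''+s)\,dx+i\,d\theta$ one has $\sqrt{dz_{s}}\sim\sqrt{s}\,\sqrt{dx}$, and since $\int_{\mathbb R}e^{-\frac s2(x-m)^{2}}dx=\sqrt{2\pi/s}$ the factor $\sqrt s$ coming from the half-form exactly cancels the $1/\sqrt s$ of the Gaussian normalization, so $\sqrt s\,e^{-\frac s2(x-m)^{2}}\,dx\to\sqrt{2\pi}\,\delta_{m}$ distributionally; the remaining smooth prefactor $e^{my-\kappa_{0}}=e^{(m-x)y+g}$ (using $\kappa_{0}=xy-g$) equals $e^{g}$ at $x=m$. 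Absorbing the equivariant phase $e^{im\theta}$ into the covariantly-constant distributional section $\delta_{m}$ supported on the leaf, we obtain $\lim_{s\to\infty}U_{s}(\sigma^{m}_{0})=\sqrt{2\pi}\,e^{g}\,\delta_{m}\,s_{0}\otimes\sqrt{dx}$, with $g=g_{P}$ for the sphere and $g=\tilde g_{P}$ for the plane, as in \cite{bfmn,kirwin:mourao:nunes:13a,nunes:14}. The only step that is not a routine computation is this last one: one must fix the sense of the limit (weak convergence tested against smooth sections in the half-form $L^{2}$ pairing), carry the $\sqrt s$ bookkeeping between the Gaussian and $\sqrt{dz_{s}}$ — which is precisely what makes the half-form correction the right choice here — and control the subleading $dx$- and $d\theta$-terms of $dz_{s}$ near $x=m$, which is immediate because the Gaussian localizes at an interior point.
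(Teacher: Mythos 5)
Your proof is correct and follows essentially the same route as the paper: the commuting factorization $U_s=e^{-\frac{s}{2}x^2}e^{-isx\partial_\theta}e^{\frac{s}{2}\partial_\theta^2}$, identification of the prequantum flow with the deformed holomorphic data, and Gaussian concentration $\sqrt{s}\,e^{-\frac{s}{2}(x-m)^2}\to\sqrt{2\pi}\,\delta_m$ using $\norm{\sqrt{dz_s}}\sim\sqrt{s}$. The only (cosmetic) difference is that you obtain $Q(H)\sigma^m_0=\tfrac{m^2}{2}\sigma^m_0$ from the eigenvalue of $Q_{\mathrm{pre}}(x)$ on equivariant sections rather than from $-\tfrac12\partial_\theta^2$ acting on the Fourier mode $e^{im\theta}$, and you are somewhat more explicit about the half-form bookkeeping and the distributional sense of the limit.
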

\begin{proof}
We have
\begin{equation}\label{xdtheta}
    X_H=-x\frac{\partial }{\partial\theta}.
\end{equation}
Substituting in (\ref{preq}), we get, with respect to the trivializing section defined by (\ref{sigmaps}), 
\begin{align*}
    Q_\text{pre}(H)&=-ix\frac{\partial }{\partial\theta}+i\left(\partial-\overline{\partial}\right)\kappa\left(x\frac{\partial}{\partial\theta}\right)+H\\
    &=-ix\frac{\partial}{\partial \theta}+xd\theta\left(-x\frac{\partial}{\partial \theta}\right)+\frac{x^2}{2}\\
    &=-ix\frac{\partial }{\partial\theta}-\frac{1}{2}x^2
\end{align*}

Furthermore, as (\ref{xdtheta}) does not preserve the complex polarization, we define
\begin{align*}
    Q(H)&=\frac{1}{2} \left(Q_\text{pre}\left(x\right)\right)^2\\
    &=\frac{1}{2}\left(-i\frac{\partial}{\partial\theta}\right)^2\\
    &=-\frac{1}{2}\frac{\partial^2}{\partial\theta^2}
\end{align*}
and so the GCST, given in (\ref{eq: evolutionop}), becomes
\begin{equation}\label{gcstalmost}
    U_s=\exp\left(-isx\frac{\partial}{\partial\theta}-\frac{s}{2}x^2\right)\exp\left(\frac{s}{2}\frac{\partial^2}{\partial\theta^2}\right).
\end{equation}
Evidently, all of the operators in the exponents commute between themselves and thus
\begin{align*}
    U^{\text{pre}}_s\left(\sigma^m_0\right)&=e^{-\frac{s}{2}x^2}e^{-isX_H}e^{m(y+i\theta)}e^{-xy+g}\tilde{s}_0\otimes\sqrt{dz}\\
    &=e^{-\frac{s}{2}x^2}e^{m(y_s+i\theta)}e^{-xy+g}\tilde{s}_0\otimes\sqrt{dz_s}\\
    &=e^{m(y_s+i\theta)}e^{-xy_s+g_s}\tilde{s}_0\otimes\sqrt{dz_s}\\
    &=\sigma^m_s
\end{align*}\par
Furthermore, only the factor $e^{mi\theta}$ is not $S^1$-equivariant, and so Eq.~\eqref{quantumcalc} follows immediately.\par
Finally, using $\norm{\sqrt{dz_s}}=\sqrt{g_s''}$ (see \cite{kirwin:mourao:nunes:13a}) and $\lim_{s\to\infty}\frac{g_s''}{s}=1$ (which is true for both the plane and the sphere), then
\begin{equation}
    \lim_{s\to\infty}\norm{U_s\left(\sigma^m_0\right)}=\lim_{s\to\infty}\,\sqrt{s}e^{-\frac{s}{2}\left(x-m\right)^2}e^{-y(x-m)+g}=\sqrt{2\pi}\,e^{g(m)}\delta_m
\end{equation}
Thus, Eq.~\eqref{finallimit} follows from $\lim_{s\to\infty}\frac{z_s}{s}=x$.
\end{proof}

% #############################################################################
% This is Chapter 4
% !TEX root = ../main.tex
% #############################################################################
% Change the Name of the Chapter i the following line
\section{Evolution of Laughlin states on the sphere}
%\cleardoublepage
% The following line allows to ref this chapter
\label{chap:implement}
We will now study many particle lowest energy states of the quantum Hall effect and their geometry dependence. For simplicity of the exposition we only consider the case of the sphere in the first $2$ sections, as the case of the plane 
is analogous
and we already described the analogies with the sphere in section \ref{ss211}.\par
We first study, in section \ref{iqhe}, the integer quantum Hall effect, in which the ground states are fully filled. Some of the mathematical notions used to represent many-particle states are introduced as well as how they evolve by imaginary time Hamiltonian flow, in particular in the limit $s\to\infty$.\par
In section \ref{fractional}, we describe the FQHE, in which we consider states filling only $1/m$ of ground states, with $m$ odd, since, for these cases, there is a very good approximation for the lowest energy states known as Laughlin states. And using a result by Gerald V. Dunne (\cite{dunne:93}), that gives a Slater decomposition of Laughlin states, we are able to describe their evolution under imaginary time Hamiltonian flow for an arbitrary number of particles.\par
These results are then used to study and discuss density profiles in section \ref{densityprofile}. One very important aspect of this discussion is the comparison with evolutions using only the prequantum evolution operator (\ref{eq: preevolutionop}), which was shown in section \ref{gcstalmost} to correspond to the Laughlin states used by Semyon Klevtsov (see \cite{klevtsov:19}). In fact, the results of the present section showcase a convergence of Laughlin states to specific Slater determinants under the evolution of the prequantum operator, which is entirely a consequence of the non-unitarity of this operator.

\subsection{LLL states of the IQHE}\label{iqhe}\leavevmode\par
The fully filled LLL state corresponds to the unique state in $\Lambda^{N}\mathcal{H}_0$, where $\mathcal{H}_0=H^0(\mathbb{CP}^1,L_0)$, with $L_0$ the holomorphic line bundle described in section \ref{ss211}, corresponding to the $s=0$ initial geometry. It is, intrinsically, an $N$-particle fermionic state.\par
Let $(\mathbb{CP}^1)^N$ denote the $N$-fold Cartesian product of $\mathbb{CP}^1$ and $\pi_i:(\mathbb{CP}^1)^N\to \mathbb{CP}^1$ the $i$th canonical projection. We have a line bundle $L^{\boxtimes N}:=\bigotimes_{i=1}^{N}\pi_i^{*}L_0\to (\mathbb{CP}^1)^N$. Given $N$ sections of $L\to \mathbb{CP}^1$, $s_1,...,s_N$, we have a natural section $s_1\boxtimes s_2\boxtimes...\boxtimes s_N:=\pi_1^{*}s_1\otimes ....\otimes \pi_N^*s_N\in \Gamma(L^{\boxtimes N})$, given by
\begin{align*}
(w_1,...,w_N)\mapsto s_1(w_1)\otimes s_2(w_2)\otimes ...\otimes s_N(w_N).
\end{align*}
Any other section can be written as linear combination of such sections and we learn that, as a vector space, we have $\Gamma(L^{\boxtimes N})\cong \Gamma(L)^{\otimes N}$. Actually, we can identify the section $\pi_1^{*}s_1\otimes ....\otimes \pi_N^*s_N$ as the tensor product $s_1\otimes ...\otimes s_N\in \Gamma(L)^{\otimes N}$. Since we are looking at fermionic particles, we wish to consider only those sections which are completely antisymmetric under permutations of the particles (i.e. permutations of variables $w_j$). Therefore the state belongs to $\Lambda^N\mathcal{H}\subset \Gamma(L)^{\otimes N}$ and corresponds to the wedge product
\begin{align*}
\Psi_{\text{IQHE}}\defeq\sigma_0^0\wedge ...\wedge \sigma_0^{N-1},
\end{align*} 
where $\{\sigma^j_0\}_{j=0}^{N-1}$ is a basis for $\mathcal{H}_0$. As a section of $L^{\boxtimes N}\to \left(\mathbb{CP}^1\right)^N$ it is written as
\begin{align*}
(w_1,...,w_N)\mapsto \sum_{\tau\in S_N}\text{sgn}(\tau)\sigma_0^{\tau(1)}(w_{1})\otimes \sigma_0^{\tau(2)}(w_{2})\otimes ...\otimes \sigma_0^{\tau(N)}(w_{N})
\end{align*}
Given a local trivialization of $L\to \mathbb{CP}^1$ over an open set $U$ provided by a local section $\Tilde{s}_0$, we have an induced local trivialization of $L^{\boxtimes N}\to \left(\mathbb{CP}^1\right)^N$ over $U\times ...\times U$ given by $\Tilde{s}_0^{\boxtimes N}:=\pi_1^{*}\Tilde{s}_0\otimes ... \otimes \pi_N^*\Tilde{s}_0$. If over $U$ we have $\sigma_0^j=f^j\Tilde{s}_0$, for some local functions $f^j$, $i=1,...,N$, then obviously,
\begin{align*}
&\sum_{\tau\in S_N}\text{sgn}(\tau)\sigma_0^{\tau(0)}(w_{1})\otimes \sigma_0^{\tau(2)}(w_{2})\otimes ...\otimes \sigma_0^{\tau(N-1)}(w_{N})\\
&=\Tilde{s}_0^{\boxtimes N}(w_1,...,w_N)\cdot\left(\sum_{\tau\in S_N}\text{sgn}(\tau)f^{\tau(0)}(w_{1})f^{\tau(1)}(w_{2})... f^{\tau(N-1)}(w_{N})\right)\\
&=\Tilde{s}_0^{\boxtimes N}(w_1,...,w_N)\cdot \det \left[\begin{array}{c c c}
f^0(w_1) & \cdots & f^0(w_N)\\
\vdots &  \ddots & \vdots \\
f^{N-1}(w_1) & \cdots & f^{N-1}(w_N)
\end{array}\right],
\end{align*} 
hence it locally looks like the Slater determinant of the associated local representatives. We will denote by $w_{s,j}=\pi_j^{*}w_{s}$ the imaginary time $\tau=-is$ evolved local holomorphic coordinate associated to the $j$-th particle, $j=1,...,N$. We will adopt a similar notation for the other coordinates $y_{s,j}$, $x_{j}$ and $\theta_j$ defined for the single particle case. Analogously to what is done in (\ref{sigmaps}), we consider the local trivialization $1^{\text{U}(1)}\otimes \sqrt{dz_0}$ (where $\sqrt{dz_0}$ denotes the product $\sqrt{dz_1}\otimes\ldots\otimes\sqrt{dz_{N-1}}$ at $s=0$) of $L$ over $U$ so that its local representative is
\begin{align*}
&\prod_{i=1}^{N}e^{-\kappa_0(y_{0,j})}\det\left[\begin{array}{c c c }
1 & \cdots & 1\\
w_{0,1} & \cdots & w_{0,N}\\
\vdots &  \ddots & \vdots \\
w_{0,1}^{N-1} & \cdots & w_{0,N}^{N-1}
\end{array}\right]= e^{-\sum_{i=1}^{N}\kappa_0(y_{0,j})} \prod_{1\leq i<j\leq N}(w_{0,j}-w_{0,i}).
\end{align*}\par
The GCST operator $U_s$	 acts on $\Gamma(L^{\otimes})\cong \Gamma(L)^{\otimes N}$ in a natural way, namely, given $s_1,...,s_N$, we have, in terms of sections of $L^{\boxtimes}$
\begin{align*}
U_s(s_1\boxtimes ...\boxtimes s_N)&=U_s\left(\pi_1^{*}s_1\otimes ...\otimes \pi_N^{*}s_N\right):=\pi_1^{*}(U_s s_1)\otimes ...\otimes \pi_N^{*}(U_s s_N)\\
&=(U_s s_1)\boxtimes ...\boxtimes (U_s s_N),
\end{align*}
or in terms of $\Gamma(L)^{\otimes N}$,
\begin{align*}
U_s(s_1\otimes ...\otimes s_N)=(U_s s_1)\otimes ...\otimes (U_s s_N).
\end{align*}
Consequently, we also get
\begin{equation}
U_s(\sigma_0^0\wedge ... \wedge \sigma_0^{N-1})= (U_s\sigma_0^0)\wedge ...\wedge (U_s\sigma_0^{N-1}).
\end{equation}\par
If $\sigma^m_s$ denotes the states described in section \ref{ss211} (cf. Eq.~\eqref{eqq29}), then it follows from Proposition~\ref{limitgcst}, that
\begin{align}\label{uspsi}
U_s(\Psi_{\text{IQHE}})=e^{-\frac{s}{2}\sum_{m=0}^{N-1}m^2} \sigma^0_s\wedge...\wedge \sigma^{N-1}_s,
\end{align}
and
\begin{align*}
&\lim_{s\to\infty}U_s(\Psi_{\text{IQHE}})\\
&=(2\pi)^{\frac{N}{2}}e^{\sum_{m=0}^{N-1}g(m)}\sum_{\tau\in S_N}\text{sgn}(\tau)(\pi_1^{*}\delta_{\tau(0)}\otimes \sqrt{dx_1})\wedge...\wedge (\pi_N^{*}\delta_{\tau(N-1)}\otimes \sqrt{dx_N}).
\end{align*}\par

\subsection{FQHE Laughlin states}\label{fractional}\leavevmode\par
We will consider now a Laughlin state with $N_e$ electrons and filling fraction $\nu\defeq N_e/N=1/m$ with $m$ odd. We have that $N_e=N/m$, and we will assume that $m$ divides $N$, where $N=\deg L$, as before. At the $s=0$ geometry, i.e. the round sphere geometry (cf. Eq.~\eqref{roundsphere}), the Laughlin state is given in a trivializing open set $U^{N_e}$ by
\begin{align*}
 \Psi_{\text{Laughlin}}\defeq e^{-\sum_{i=1}^{N_e}\kappa_0(y_{0,j})} \prod_{1\leq i<j\leq N_e}(w_{0,j}-w_{0,j})^{m}1^{\text{U}(1)}\otimes \sqrt{dz_0}.	
\end{align*} 
Since $m$ is odd it is clearly skew-symmetric. The pre-factor is just the pointwise norm of the holomorphic section $\sigma_{P,0}^{\boxtimes{N_e}}$ while the remainder is holomorphic. Therefore we conclude that this state belongs to $\Lambda^{N_e}H^0(\mathbb{CP}^1,L_0)$.

We will now consider the GCST operator in the $N_e$-particle sector. From Proposition~\ref{limitgcst}, it is clear that at imaginary time $\tau=-is$, the operator assumes the simple representative 
\begin{align*}
U_{s}=\exp\left(-s\sum_{i=1}^{N_e}\frac{x_i^2}{2}-is\sum_{i=1}^{N_e}x_i\frac{\partial}{\partial \theta_i}\right)\exp\left(\frac{s}{2}\sum_{i=1}^{N_e}\frac{\partial^2}{\partial \theta_i^2}\right),
\end{align*}
where, evidently, the exponentials commute with each other. Hence, a computation entirely analogous to that of Proposition~\ref{limitgcst} gives
\begin{align*}
U_s\left(\Psi_{\text{FQHE}}\right)=e^{-\sum_{i=1}^{N_e}\kappa_s(y_{s,j})}\exp\left(\frac{s}{2}\sum_{i=1}^{N_e}\frac{\partial^2}{\partial \theta_i^2}\right)\left(\prod_{1\leq i<j\leq N_e}(w_{s,j}-w_{s,j})^m\right)1^{\text{U}(1)}\otimes \sqrt{dz_s}.
\end{align*}

%\subsubsection{Slater decomposition of Laughlin states}
Given $\lambda=(\lambda_1,...,\lambda_{N_e})$ with $ 0\leq \lambda_1< \lambda_2 <....<\lambda_{N_e}\leq  N$, where $N=\deg L$, we have an associated Slater determinant state
\begin{align*}
\Psi_s^{\lambda}\defeq\sigma_s^{\lambda_1}\wedge....\wedge \sigma_s^{\lambda_{N_e}},
\end{align*} 
where $\{\sigma^{m}_s\}_{m=0}^{N}$ is again the monomial basis of $H^0(\mathbb{CP}^1,L_s)$ given in (\ref{eqq29}). The states $\{\Psi_s^{\lambda}\}_{0\leq \lambda_1<\lambda_2<...<\lambda_{N_e}\leq N}$ form a basis of $\Lambda^{N_e}H^{0}(\mathbb{CP}^1,L_s)$ and moreover,
\begin{align}\label{slaterus}
U_s\left(\Psi_{0}^{\lambda}\right)= e^{-\frac{s}{2}\sum_{i=1}^{N_e}\lambda_i^{2}}\Psi_{s}^{\lambda}.
\end{align}
Since the Laughlin states with odd filling fraction belong to $\Lambda^{N_e}H^{0}(\mathbb{CP}^1,L_s)$, we will be able to write
\begin{align*}
\Psi_{\text{Laughlin}}^{\nu=\frac{1}{m}}=\sum_{\lambda}a^{\nu}_{\lambda} \Psi^{\lambda}_0,
\end{align*} 
so that
\begin{align}\label{slaterlaughlinev}
U_s\left(\Psi_{\text{Laughlin}}^{\nu}\right)=\sum_{\lambda}  e^{-\frac{s}{2}\sum_{i=1}^{N_e}\lambda_i^{2}}a^{\nu}_{\lambda}\Psi_s^{\lambda}
\end{align}\par
Gerald V. Dunne in his paper~\cite{dunne:93} gives a method to determine these coefficients in a combinatorial manner. These become harder to compute the bigger $N_e$ is, and so we only compute exact density profiles for $N_e=2$ and $N_e=3$.\par
For $N_e=2$, one obtains
\begin{align*}
\Psi_{\text{Laughlin}}^{\nu=1/3}= \Psi_0^{(0,3)}-3\Psi_0^{(1,2)}
\end{align*}
and thus the evolved state is (cf. Eq.~\eqref{slaterlaughlinev})
\begin{align*}
U_s(\Psi_{\text{Laughlin}}^{\nu=1/3})=e^{-\frac{s}{2}9}\Psi_s^{(0,3)} -3e^{-\frac{s}{2}5}\Psi_s^{(1,2)}.
\end{align*}
For $N_e=3$,
\begin{align*}
\Psi_{\text{Laughlin}}^{\nu=1/3}=\Psi_{0}^{(0,3,6)}-3\Psi_{0}^{(1,2,6)}-3\Psi_{0}^{(0,4,5)}+6\Psi_{0}^{(1,3,5)}-15\Psi_0^{(2,3,4)},
\end{align*}
so that
\begin{align*}
U_s(\Psi_{\text{Laughlin}}^{\nu=1/3})=&e^{-\frac{s}{2} 45}\Psi_{s}^{(0,3,6)}-3e^{-\frac{s}{2} 41}\Psi_{s}^{(1,2,6)}-3e^{-\frac{s}{2}41}\Psi_{s}^{(0,4,5)}\\
&+6e^{-\frac{s}{2} 35}\Psi_{s}^{(1,3,5)}-15e^{-\frac{s}{2} 29}\Psi_s^{(2,3,4)}.
\end{align*}
\section{Evolution of density profiles by imaginary time flows on the sphere and on the plane}
\label{densityprofile}

\subsection{Evolution of density profiles }

We wish to evaluate the density profiles given by
\begin{equation}
\rho_s(x)\defeq\frac{\left< U_s\left(\Psi_{\text{Laughlin}}^{1/3}\right),\sum_{i=1}^{N_e}\delta(x,x_i)U_s\left(\Psi_{\text{Laughlin}}^{1/3}\right)\right>}{\norm{ U_s\left(\Psi_{\text{Laughlin}}^{1/3}\right)}^2_{L^2}}.
\end{equation}
with
\begin{align*}
U_s\left(\Psi_{\text{Laughlin}}^{1/3}\right)=\sum_{\lambda}a_{\lambda} e^{-\frac{s}{2}\sum_{i=1}^{N_e}\lambda_i^2} \Psi_s^{\lambda},
\end{align*}
We also have orthogonality of the Slater determinant states, i.e.,
\begin{align}\label{orthogonal}
\left<\Psi_s^{\lambda},\Psi_s^{\lambda'}\right> =\delta^{\lambda,\lambda'}f_s^{\lambda}
\end{align}
with
\begin{align*}
f_s^{\lambda}=\prod_{i=1}^{N_e}\norm{\sigma_s^{\lambda_{i}}}^2_{L^2},
\end{align*}
due to rotational symmetry (orthogonality of $e^{im\theta}$ and $e^{in\theta}$, $m\neq n$).
Therefore, we find that
\begin{align*}
\rho_s(x)=\frac{1}{\norm{ U_s\left(\Psi_{\text{Laughlin}}^{1/3}\right)}_{L^2}}\sum_{\lambda,\lambda'}\bar{a}_\lambda a_{\lambda'} e^{-\frac{s}{2}\sum_{i=1}^{N_e}\left(\lambda_i^2+\lambda_i^{'2}\right)} \left< \Psi_s^{\lambda}, \sum_{j=1}^{N_e}\delta(x,x_j) \Psi_s^{\lambda'}\right>,
\end{align*}
and since $\{\Psi_s^\lambda\}$ form an orthogonal basis (cf. Eq.~\eqref{orthogonal}), we have
\begin{align*}
\rho_s(x)&=\frac{1}{\norm{ U_s\left(\Psi_{\text{Laughlin}}^{1/3}\right)}_{L^2}}\sum_{\lambda}|a_\lambda^\nu|^2  e^{-s\sum_{i=1}^{N_e}\lambda_i^2} \left< \Psi_s^{\lambda}, \sum_{j=1}^{N_e}\delta(x,x_j) \Psi_s^{\lambda}\right>
\end{align*}
where
\begin{equation}\label{densityfraquinha}
\norm{U_s\left(\Psi_{\text{Laughlin}}^{1/3}\right)}_{L^2}^2= \sum_{\lambda}|a_\lambda^\nu|^2e^{-s\sum_{i=1}^{N_e}\lambda_i^2}\norm{\Psi^{\lambda}_s}_{L^2}=\sum_{\lambda}|a_\lambda^\nu|^2e^{-s\sum_{i=1}^{N_e}\lambda_i^2}\prod_{i=1}^{N_e}\norm{\sigma_s^{\lambda_i}}_{L^2}.
\end{equation}
Finally, we note that
\begin{align*}
    \left< \Psi_s^{\lambda}, \sum_{j=1}^{N_e}\delta(x,x_j) \Psi_s^{\lambda}\right>&=\sum_{j=1}^{N_e}\int_{\left(\C\mathbb{P}^1\right)^{N_e}}\delta(x-x_j)\prod_{i=1}^{N_e}\nu\left(\sigma_s^{\lambda_i}(x_i),\sigma_s^{\lambda_i}(x_i)\right)dx_i\\
    &=\sum_{j=1}^{N_e}\nu\left(\sigma_s^{\lambda_j}(x),\sigma_s^{\lambda_j}(x)\right)\int_{\left(\C\mathbb{P}^1\right)^{N_e-1}}\prod_{i=1,i\neq j}^{N_e}\nu\left(\sigma_s^{\lambda_i}(x_i),\sigma_s^{\lambda_i}(x_i)\right)dx_i\\
    &=\sum_{j=1}^{N_e}\nu\left(\sigma_s^{\lambda_j}(x),\sigma_s^{\lambda_j}(x)\right)\frac{\prod_{i=1}^{N_e}\norm{\sigma_s^{\lambda_i}}_{L^2}}{\norm{\sigma_s^{\lambda_j}}_{L^2}},
\end{align*}
where $\nu$ denotes the pointwise hermitian product. Therefore,  (\ref{densityfraquinha}) becomes
\begin{equation}\label{densityfortezinha}
    \rho_s(x)=\frac{\sum_{\lambda}|a_\lambda^\nu|^2  e^{-s\sum_{i=1}^{N_e}\lambda_i^2} \prod_{i=1}^{N}\norm{\sigma_s^{\lambda_i}}_{L^2}\sum_{j=1}^{N_e}\frac{\nu\left(\sigma_s^{\lambda_j}(x),\sigma_s^{\lambda_j}(x)\right)}{\norm{\sigma^{\lambda_j}_s}_{L^2}}}{\sum_{\lambda}|a_\lambda^\nu|^2e^{-s\sum_{i=1}^{N_e}\lambda_i^2}\prod_{i=1}^{N_e}\norm{\sigma_s^{\lambda_i}}_{L^2}}.
\end{equation}
We can still make the result a bit more explicit by using
\begin{align*}
\nu\left(\sigma^{\lambda_j}_s(x),\sigma^{\lambda_j}_s(x)\right)=e^{-2\kappa_s}\norm{dz_s} \; |w_s|^{2\lambda_j}
\end{align*}
and thus obtaining
\begin{equation}
    \rho_s(x)=e^{-2\kappa_s}\norm{dz_s}\frac{\sum_{\lambda}|a_\lambda^\nu|^2  e^{-s\sum_{i=1}^{N_e}\lambda_i^2} \prod_{i=1}^{N}\norm{\sigma_s^{\lambda_i}}_{L^2}\sum_{j=1}^{N_e}\frac{|w_s|^{2\lambda_j}}{\norm{\sigma^{\lambda_j}_s}_{L^2}}}{\sum_{\lambda}|a_\lambda^\nu|^2e^{-s\sum_{i=1}^{N_e}\lambda_i^2}\prod_{i=1}^{N_e}\norm{\sigma_s^{\lambda_i}}_{L^2}}.
\end{equation}

\subsection{Large $s$ asymptotics}
We will now consider the large $s$ limit of $\rho_s(x)$. From (\ref{quantumcalc}) and (\ref{finallimit}), we get, as $s\to\infty$
\begin{equation}
    \nu\left(\sigma_s^{\lambda_i},\sigma_s^{\lambda_i}\right)\sim\sqrt{\pi}e^{s|\lambda_i|^2}e^{2g_p(\lambda_i)}\delta_{\lambda_i},
\end{equation}
and
\begin{align}
\norm{\sigma_s^{\lambda_i}}_{L^2}^2\sim\sqrt{\pi}e^{s|\lambda_i|^2}e^{2g_p(\lambda_i)},
\end{align}
Using both to take the limit $s\to\infty$ in (\ref{densityfortezinha}), we obtain
\begin{equation}\label{limitdensityfortezinha}
    \lim_{x\to\infty}\rho_s(x)=\frac{\sum_{\lambda}|a_\lambda^\nu|^2 e^{2g_p(\lambda_i)}\sum_{j=1}^{N_e}\delta_{\lambda_j}}{\sum_{\lambda}|a_\lambda^\nu|^2e^{2g_p(\lambda_i)}}.
\end{equation}

In particular, for the IQHE state, we have $N_e=N$ with $\lambda=(0,1,...,N)$ being the only possible state, so we immediately obtain
\begin{equation}
    \lim_{x\to\infty}\rho_s(x)=\sum_{j=1}^N\delta_{\lambda_j}
\end{equation}
corresponding to a uniform distribution of Bohr-Sommerfeld leaves on $\mathbb{CP}^1$, in particular those corresponding to the integer points $P\cap\mathbb{Z}=[-1/2,N-1/2]\cap\mathbb{Z}=\{0,...,N-1\}=[0,N-1]\cap\mathbb{Z}$.
\subsection{Examples with $\nu=1/3$ filling}\label{finaaaal}
We will now analyze (\ref{limitdensityfortezinha}) in the specific case where $\nu=1/3$.\par
First, we look at examples with few particles, for which the combinatoric coefficients $a_\lambda^\nu$ are easily determined (see Ref.~\cite{dunne:93}). We can then plot the density profiles of these states for different values of $s$, as is done in Figures~\ref{figura1} and~\ref{figura2} for the sphere and in Figures~\ref{figura3} and 
Figures~\ref{figura4} for the plane, for systems of $2$ and $3$ particles, respectively.\par
The first thing we observe is that the ratios between heights of different peaks indeed seem to approach well-defined limits, very close to
\begin{equation}\label{limitratio}
    R_{m,n}=\frac{\sum_{\lambda\colon m\in\lambda}\,\lvert a^\nu_\lambda\rvert^2 e^{2\sum_{i=1}^{N_e}g_P(\lambda_i)}}{\sum_{\Tilde{\lambda}\colon n\in\Tilde{\lambda}}\,\lvert a^\nu_{\Tilde{\lambda}}\rvert^2 e^{2\sum_{i=1}^{N_e}g_P(\Tilde{\lambda}_i)}}.
\end{equation}
between peaks at $x=m$ and $x=n$ corresponding to the density profile (\ref{limitdensityfortezinha}), which yields
\begin{align*}
    N_e&=2\implies R_{0,1}\approx 1.08\\[8pt]
    N_e&=3\implies\left\{\begin{array}{ll}
    R_{0,1}\approx 1.03\\[4pt]
    R_{1,2}\approx 1.01
    \end{array}
    \right.
\end{align*}
for the sphere and
\begin{align*}
    N_e&=2\implies R_{0,1}\approx 0.35\\[8pt]
    N_e&=3\implies\left\{\begin{array}{ll}
    R_{0,1}\approx 0.81\\[4pt]
    R_{1,2}\approx 0.50
    \end{array}
    \right.
\end{align*}
for the plane.\par
In Figures~\ref{figura7} and~\ref{figura8} we show imaginary time Hamiltonian flow evolutions using only the prequantum evolution operator (\ref{eq: preevolutionop}), for $2$ electrons on the plane and $3$ on the sphere. We see that, as expected from the results found in Proposition~\ref{limitgcst}, the Laughlin states converge to a single Slater determinant, namely the one with the largest $\abs{\lambda}^2$. This is purely a consequence of non-unitarity of the evolution operator used.\par

We now consider the large $N_e$ limit. We see from (\ref{limitdensityfortezinha}) that how much a certain state $\lambda$ contributes to the height of its peaks depends on the coefficient $a^{\nu}_\lambda$ and the factor
$$S(\lambda)=\exp\left(2\sum_{i=1}^{N_e}g_P(\lambda_i)\right).$$\par
For large $N_e$, the dominant $|a^{\nu}_\lambda|^2$ comes from the maximally ``bunched'' state $\lambda^M=(0,1,...,N_e-1) +(N_e-1,N_e-1,...,N_e-1)$, whose coefficient satisfies $|a^{1/3}_{\lambda^M}|=(2N_e-1)!!$ (see Ref.~\cite{dunne:93}). In contrast, the minimum $|a^{\nu}_\lambda|^2$ comes from the most uniform state $\lambda^m=3(0,1,...,N_e-1)$, for which $|a^{1/3}_{\lambda^m}|=1$. By plotting the ratio of the factors $|a^{1/3}_{\lambda}|^2S(\lambda)$ for these two states as a function of $N_e$ (Figure \ref{figura5}), we see that the change in the $S$ factor greatly outweighs the change in the combinatorial factor $|a^{1/3}_\lambda|^2$.\par
Since the function $g_P$ has a minimum in the center of the polytope and increases away from it, the state $\lambda^M$ is the one for which $S$ is the smallest. However there are states with higher concentration in the poles that have much larger $S$ than the value of $S(\lambda^m)$. As this is the leading factor in determining the heights of peaks, we expect these states to have much more relevance and thus the density to be higher near the poles than in the center.\par
A similar analysis can be made for many particle states on the plane, since the combinatorial factors are the same and do not change significantly for different Slater determinants in comparison with the function
$$\Tilde{S}(\lambda)=\exp\left(2\sum_{i=1}^{N_e}\Tilde{g}_P(\lambda_i)\right),$$
as can be seen in the plot in Figure~\ref{figura6}.

%\appendix
\newpage

\begin{figure}
    \centering
    \includegraphics[width=\linewidth]{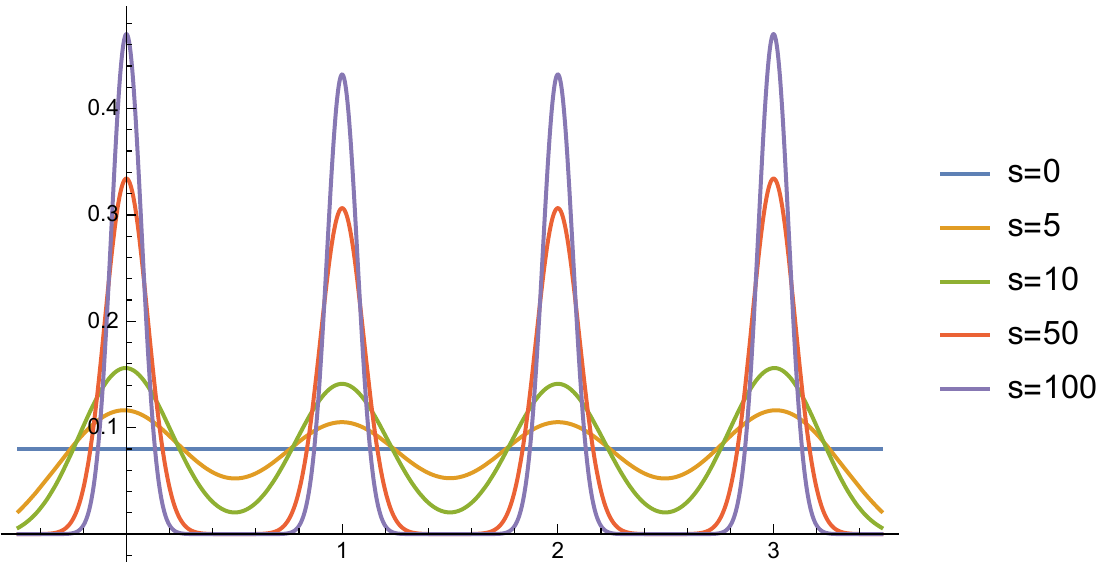}
    \caption{Density profiles of the evolved states of $2$ particles on the sphere with $s=0$, $s=5$, $s=10$, $s=50$ and $s=100$.}
    \label{figura1}
\end{figure}
\begin{figure}
    \centering
    \includegraphics[width=\linewidth]{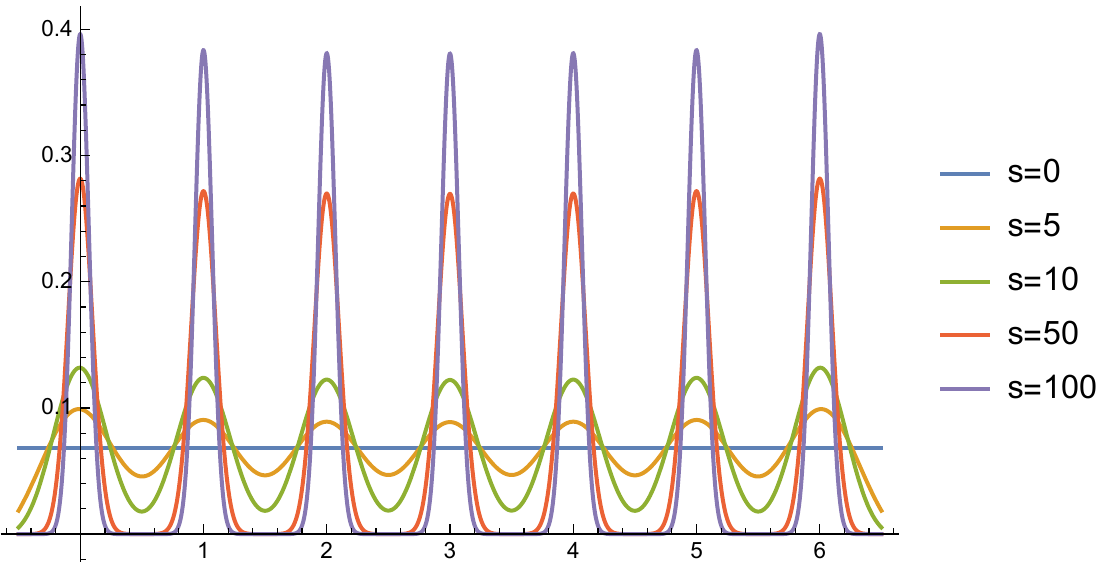}
    \caption{Density profiles of the evolved states of $3$ particles on the sphere with $s=0$, $s=5$, $s=10$, $s=50$ and $s=100$.}
    \label{figura2}
\end{figure}
\begin{figure}
    \centering
    \includegraphics[width=\linewidth]{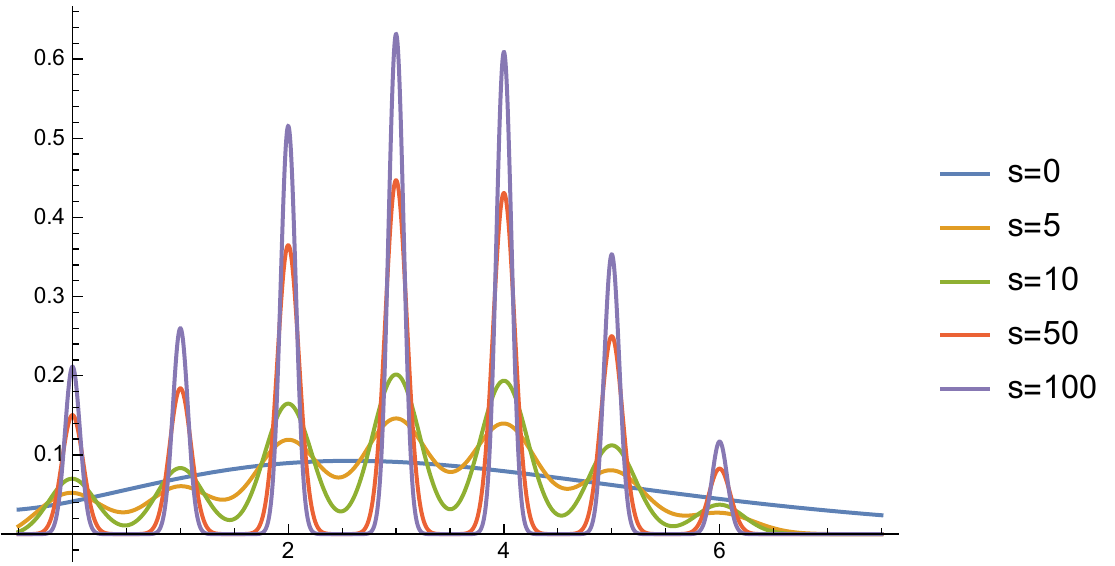}
    \caption{Density profiles of the evolved states of $2$ particles on the plane with $s=0$, $s=5$, $s=10$, $s=50$ and $s=100$.}
    \label{figura3}
\end{figure}
\begin{figure}
    \centering
    \includegraphics[width=\linewidth]{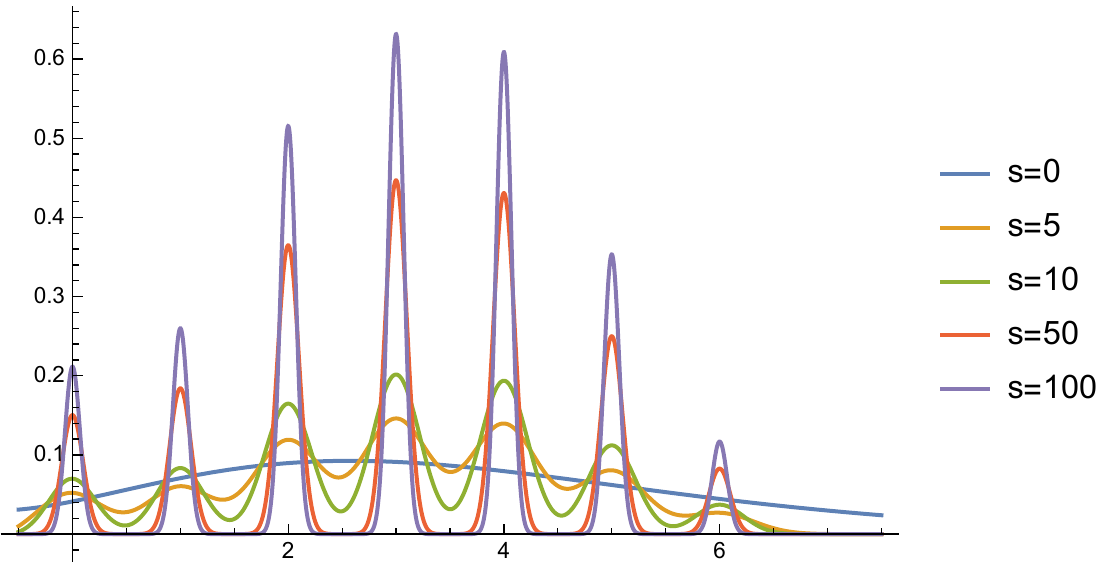}
    \caption{Density profiles of the evolved states of $3$ particles on the plane with $s=0$, $s=5$, $s=10$, $s=50$ and $s=100$.}
    \label{figura4}
\end{figure}
\begin{figure}
    \centering
    \includegraphics[width=\linewidth]{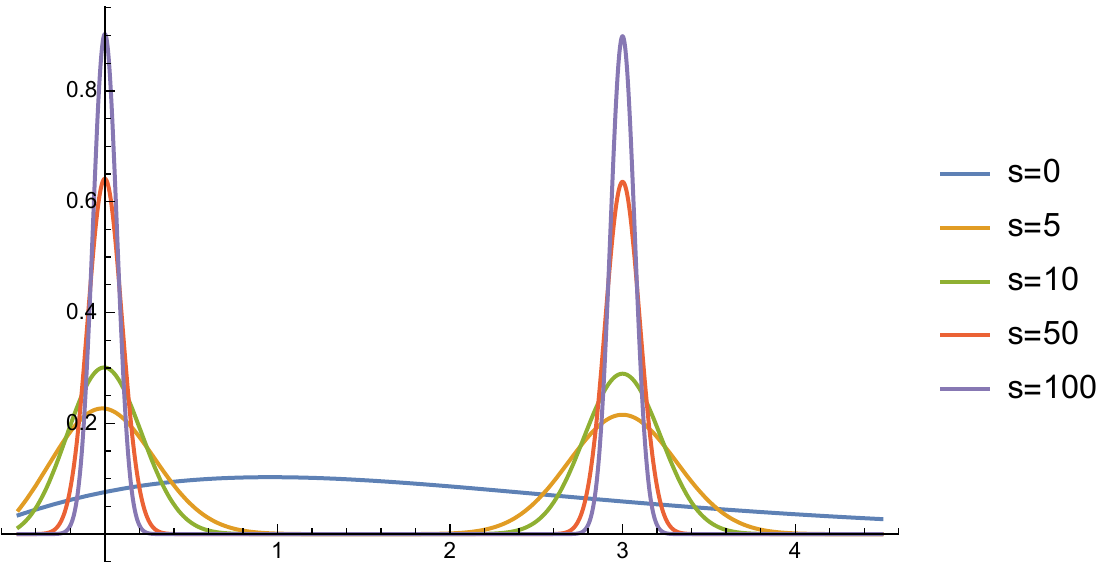}
    \caption{Density profiles of states of $2$ particles on the plane evolved only with prequantum operator for $s=0$, $s=5$, $s=10$, $s=50$ and $s=100$.}
    \label{figura7}
\end{figure}
\begin{figure}
    \centering
    \includegraphics[width=\linewidth]{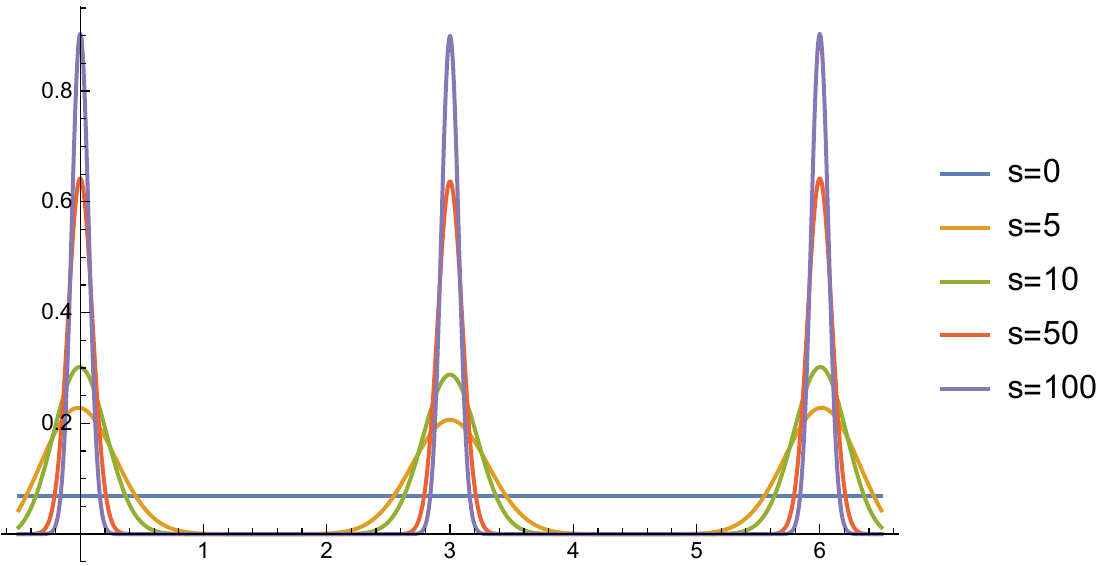}
    \caption{Density profiles of states of $3$ particles on the sphere evolved only with prequantum operator for $s=0$, $s=5$, $s=10$, $s=50$ and $s=100$.}
    \label{figura8}
\end{figure}
\begin{figure}
    \centering
    \includegraphics[width=0.8\linewidth]{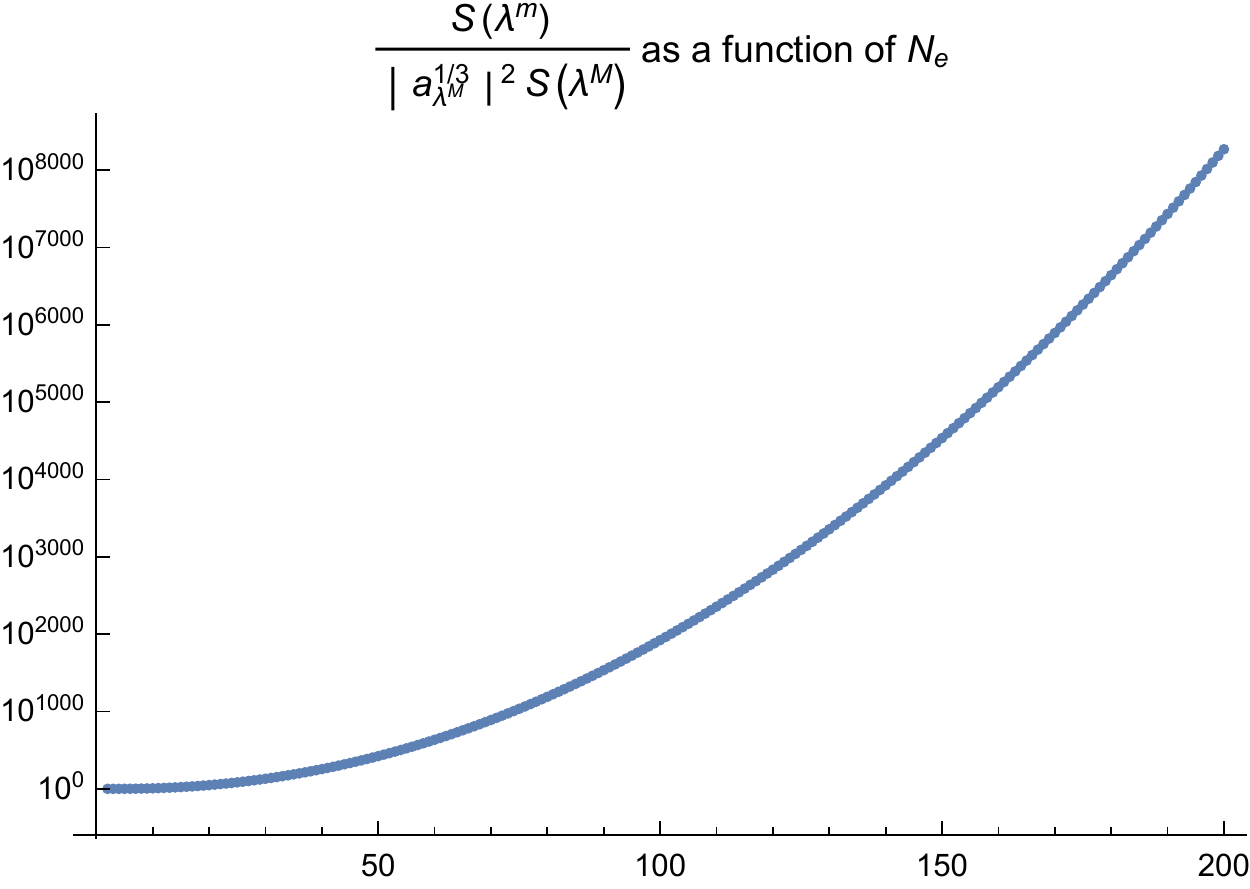}
    \caption{}
    \label{figura5}
\end{figure}
\begin{figure}
    \centering
    \includegraphics[width=0.8\linewidth]{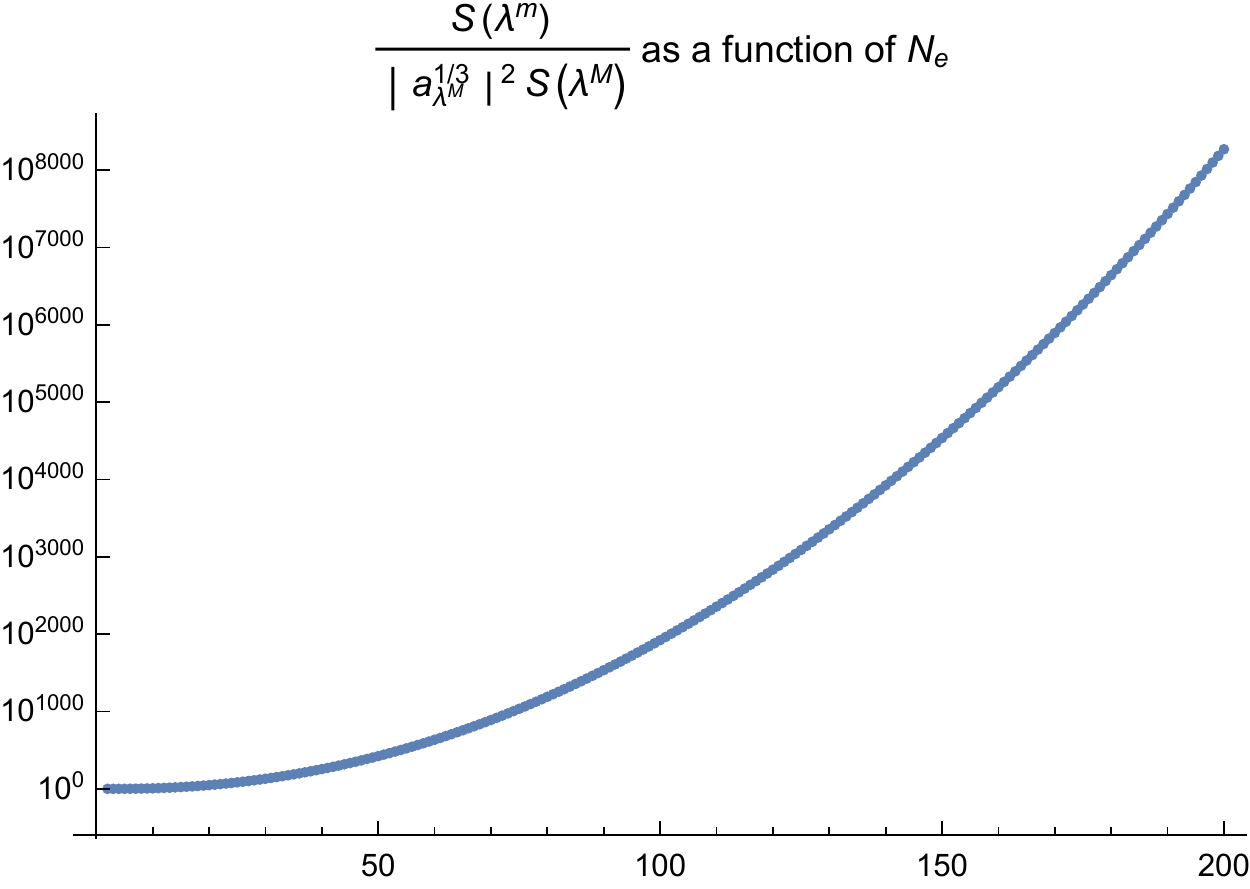}
    \caption{}
    \label{figura6}
\end{figure}

% #############################################################################
% This is Chapter 6
% !TEX root = ../main.tex
% #############################################################################
% Change the Name of the Chapter i the following line

\vspace{5cm}
\section{Conclusion}
%\cleardoublepage
% The following line allows to ref this chapter
\label{chap:conclusion}
With this work, we achieved our goal of using the GCST to obtain a detailed description of the evolution of Laughlin states on the sphere under deformations of the geometry.\par
Starting with one particle states, the results of Proposition~\ref{limitgcst}, not only describe how the GCST acts on the quantum Hilbert space of the sphere, but also allow us compare this evolution with the one obtained via the prequantum evolution operator defined by (\ref{eq: preevolutionop}). We see that, although the latter gives a very intuitive evolution, yielding states considered in the literature to be Laughlin states (see \cite{klevtsov:19}), it is highly non-unitary, which is an issue that we observe to be precisely fixed by considering the GCST instead.\par
The results of chapter \ref{chap:implement} then highlight the consequences of the aforementioned proposition to many particle states, in particular Laughlin states. The density profile obtained in expression (\ref{densityfortezinha}) provides a very useful description of the evolution of Laughlin states, and cases with few particles are computed in detail (see Figures~\ref{figura1},~\ref{figura2},~\ref{figura3} and~\ref{figura4}) for the sphere and the plane.\par
We pay special attention to extreme deformations ($s\to\infty$), where we see the density converge to integer points on the polytope with peak ratios given by well-defined limits (cf. Eq.~\eqref{limitratio}), dependent purely on physical properties of the system. This heavily contrasts with the evolution obtained by only considering the prequantum operator, where non-unitarity causes convergence of the Laughlin states to specific Slater determinants (see Figures~\ref{figura7} and~\ref{figura8}).\par
Finally, we see that the density profiles we obtained also provide information related to systems with a large number of particles as $s\to\infty$, namely higher concentrations near the poles for the sphere and near the center for the plane.\par
Further work would be useful to obtain a systematic and reliable way to compute the coefficients $a_\lambda$ present in (\ref{densityfortezinha}). This would allow for an exact computation of the evolution of states for an arbitrary number of particles as $s\to\infty$. The study of analogous deformations to those considered here, namely on the torus for which a Laughlin state was also given by Haldane (see \cite{haldane:83}), or induced by different Hamiltonians would also be extremely relevant.

\section*{Acknowledgements}
GM thanks the Calouste Gulbenkian Foundation for a fellowship
Novos Talentos em Tecnologias Quânticas and CAMGSD
for a BIL fellowship in the beginning of this work.
BM acknowledges the support from 
SQIG,  Instituto de Telecomunica\c{c}\~oes, 
and  FCT/Portugal through the projects 
UIDB/50008/2020, H2020 project SPARTA, as well as projects QuantMining POCI-01-0145-FEDER-031826 and PREDICT PTDC/CCI-CIF/29877/2017.
JM and JPN were funded by 
 FCT/Portugal through
the projects CAMGSD UIDB/04459/2020 and  PTDC/MAT-OUT/28784/2017. PM 
thanks CAMGSD
for a BIL fellowship in the beginning of this work and
 the
ERC-SyG project “Recursive and Exact New Quantum Theory” (ReNewQuantum).

\bibliographystyle{unsrt}
\bibliography{M4N}
\end{document}